\def\QED{\mbox{\rule[0pt]{1.5ex}{1.5ex}}} %Pascale
\def\proof{\hspace*{-1.4em}{{\itshape Proof: }}} %Pascale
\def\endproof{\hspace*{\fill}~\QED\par\endtrivlist\unskip} %Pascale
\newtheorem{definition}{Definition} %Pascale
\newtheorem{lemma}{Lemma} %Pascale
\newtheorem{property}{Property} %Pascale
\newtheorem{theorem}{Theorem} %Pascale
\newtheorem{corollary}{Corollary} %Pascale
\newtheorem{remark}{Remark} %Cedric
\newenvironment{method}[1][Method]{\begin{trivlist}
\item[\hskip \labelsep {\bfseries Method #1:}]}{\end{trivlist}}
\newcommand{\ZZ}{\mathbb{Z}}
\newcommand{\RR}{\mathbb{R}}
\newcommand{\Neigh}[1]{\mathcal{N}(#1)}
\newcommand{\refsec}[1]{Section~\ref{#1}}
\newcommand{\myvec}[1]{u_{#1}}
\newcommand{\XXX}[1]{{\color{red}{XXX:#1}}}
\newcommand{\minipagesize}{\textwidth}
\newcommand{\mybox}[1]{
\vspace{1mm}\fbox{\begin{minipage}{\minipagesize}#1\end{minipage}}\vspace{1mm}}
\begin{document}
\RRNo{7756}
\makeRR   % cas d'un rapport de recherche
%% \makeRT % cas d'un rapport technique.
%% a partir d'ici, chacun fait comme il le souhaite
\tableofcontents
\newpage

%Reviewer: "Intro - may need a slightly better structure to ensure that (a) the formal problem is clear (and related to existing work), (b) the problem statement done is crystal clear, and (c) that the approach undertaken is motivated by real-world examples and needs."
\section{Context and motivation}\label{contextSec}
%=================================================
Node coloring consists in assigning colors to nodes using a minimum number of colors such that two interfering nodes do not have the same color. It has been proved that the 1-hop coloring problem in \cite{Gar79} and more generally the $h$-hop node coloring problem with $h$ an integer $\geq 1$ is NP-complete~\cite{C83,RR11}. That is why, approximation algorithms are adopted. Among them, FirstFit \cite{Car07} assigns to the uncolored node with the highest priority the first available color. Unfortunately, the number of colors obtained depends on the node coloring order given by the priority. A whole class of results expresses properties related to the \emph{worst-case} performance of approximation algorithms: 
they typically prove that, for any input graph of a given family, the coloring obtained by a given 
algorithm  uses at most $\alpha$ times the optimal number of colors. 
Such an algorithm is denoted an $\alpha$-approximation algorithm. 

Many coloring algorithms have been designed to be used in wireless networks to make the medium access more efficient. Among them, we can cite TRAMA~\cite{Rajendran03}, FLAMA~\cite{Rajendran05}, ZMAC~\cite{ZMAC} which is based on the DRAND~\cite{DRAND} coloring algorithm, TDMA-ASAP~\cite{TDMA-ASAP}, FlexiTP~\cite{FlexiTP} and SERENA~\cite{iwcmc08}. In short, time slots are assigned per node color: two nodes with the same color can transmit in the same time slot without interfering.

In this research report, we focus on wireless sensor networks with grid topologies. These topologies are regular. 
Wireless sensors organized according to a grid topology are 
a natural choice and exist in many real applications.
Often, the grid organization is easy to deploy,
and is efficient in terms of coverage, connectivity and management.
For instance, a grid topology is one of the best methods to ensure sensor 
coverage for surveillance \cite{BFKKP10,CIQC02}. 
It also helps to collect measurements with a uniform spatial sampling such as for
instance in precision agriculture and irrigation as in \cite{MGHMPT08}: when physical phenomena are numerically modeled, measurements from a grid pattern may be a direct input or directly compared to the output of equations solved with the finite element method on a grid --- without requiring additional sensor localization and numerical measurement interpolation.\\
Given such grid topologies, the question (and our problem statement), is:
what is the method to color grids while maximizing the color reuse?
Naturally, a first step towards answering the question, is to apply existing
algorithms to a sample of grids and analyze the output for some sample topologies. %Applying an algorithm with the priority of nodes as a parameter, gives
%the results summarized in table~\ref{TableVariousPrioNodeNb}.
%\begin{small}
\begin{LongVer}
\begin{table}[!htb]
\caption{Number of colors for various transmission ranges, grid sizes and priority assignments., for 3-hop coloring}
\label{TableVariousPrioNodeNb}
\begin{center}
\begin{tabular}{|c|c|c|c|c|}
\hline
Grid size & priority assignment & colors &  colors \\
 &  & for range = 1 & for range = 2 \\
\hline \hline
10x10 & line or column& 8* & 30 \\
          \cline{2-4}
          & diagonal &8* & 28 \\
          \cline{2-4}
          & distance to origin &8* & 30 \\
          \cline{2-4}
          & random &13 & 36\\
    \cline{1-4}
20x20 & line or column & 15 & 33\\
           \cline{2-4}
          & diagonal &8* & 29\\
          \cline{2-4}
          & distance to origin &8* & 30\\
         \cline{2-4}
          & random &14 & 41\\
\hline 
\end{tabular}
\begin{comment}
\hline \hline
2 & 10x10 & line or column& 30\\
          \cline{3-4}
          && diagonal &28\\
          \cline{3-4}
          && distance to origin &30\\
          \cline{3-4}
          && random &36\\
  \cline{2-4}
  & 20x20 & line or column& 33\\
          \cline{3-4}
          && diagonal &29\\
          \cline{3-4}
          && distance to origin &30\\
          \cline{3-4}
          && random &41\\
\hline
\end{comment}
\end{center}
\end{table}
\end{LongVer}
%\end{small}
%ichrak
\newline
\begin{LongVer}
\newpage
We have performed simulations of a 3-hop coloring, because 3-hop coloring is needed when broadcast and immediate acknowledgement of unicast transmission are required.  
\end{LongVer}
The algorithm colors nodes according to their priority order among their neighbors up to $3$-hop. Node priority is given by one of the following heuristics: the position of the node in the line, column, diagonal, its distance to the grid center or random.
Results are summarized in table~\ref{TableVariousPrioNodeNb}. The '*' symbol highlights the optimality of the number of colors used. The reader can refer to \cite{RR11} for the proof of the optimal number of colors needed to color some grids with various transmission ranges. Different radio ranges are simulated.
%and further detailed in section~\ref{sec:related-results}.
%ichrak
%Some experiences on coloring 
Mostly, we observe that no priority assignment tested provides the optimal number of colors for any grid size and any radio range.
\newline
This prompts the following questions: Does a priority assignment exist in grid topologies such that the number of colors does not depend on node number but only on radio range?
Is it possible to take advantage of the regularity of grid to design a $h$-hop coloring algorithm able to find the optimal number of colors for any grid with any transmission range value?
%Is it possible to obtain a color pattern that can be periodically reproduced to tile the whole grid?
%
 Moreover, does a periodic color pattern that can tile the whole topology for a given radio range, exist? What is the optimal number of colors? Is the optimal number of color reached by periodic colorings?
%{ Is it possible to take advantage of this regularity to design a $h$-hop coloring algorithm able to find the optimal number of colors for any grid with any transmission  range value? Is it possible to obtain a color pattern that can be periodically reproduced to tile the whole grid? \newline We observe that no priority assignment tested provides the optimal number of colors for any grid size and any radio range. The random assignment generally provides inferior results. Moreover, the number of colors tends to increase with the grid size. Does a priority assignment exist in grid topologies such that the number of colors does not depend on node number but only on radio range? Moreover, does a periodic color pattern that can tile the whole topology for a given radio range, exist? What is the optimal number of colors?
%}

In this research report, we answer these questions by proposing a coloring method, 
the {\em Vector-Based Coloring Method}, denoted {\em VCM} which provides an optimal and valid coloring of grids.
\newline
%for assigning colors to sensor nodes organized as a 
%grid and which is asymptotically optimal when radio range increases.

\begin{LongVer}

The research report is organized as follows.
In Section~\ref{sec:notations-statement}, we introduce some definitions and notations. We present the problem statement and give an intuitive idea of VCM and introduce its main components.
In Section~\ref{periodicCol}, we focus on periodic coloring. In Section~\ref{sec:vcm-color-computation}, we consider periodic colorings where any color is periodically reproduced, and show how VCM computes the color of an node in the grid. In Section~\ref{sec:vcm-validity}, we restrict or study of periodic coloring to valid ones. We propose two methods to check the validity of periodic coloring. Section~\ref{sec:vcm-ovs} allows us to characterize an optimal coloring and give upper and lower bounds on the number of colors, and shows that periodic colorings (and therefore VCM) are asymptotically optimal when the radio range increases\footnote{showing that specifically for grids, there are better techniques than already known general $\alpha$-approximations (even for unit disk graphs, such as \cite{gsw1998}), since here $\alpha = 1 + O(1/R)$ when $R \rightarrow \infty$}.
We then show how to reduce the research space of the generator vectors produced by VCM in order to reduce the complexity of VCM that we evaluate (which is polynomial time). We conclude this section by giving the method to find the optimal valid generator vectors.
In Section~\ref{sec:vcm-practice}, we summarize by giving rules to apply VCM in practice. Results obtained by VCM are given in Section~\ref{resultsSec} and compared with those obtained by different heuristics.  Finally, in Section~\ref{sec:Realwireless}, we discuss how to apply VCM in real wireless sensor networks. We conclude in Section~\ref{conclusion}. In the Annex, the reader will find some mathematical results and grid properties useful for grid coloring.

\end{LongVer}
\begin{ShortVer}
\XXX{UPDATE-THIS: The paper is organized as follows:
in Section~\ref{sec:notations-statement}, we introduce some definitions, notations and the problem statement; in Section~\ref{}
in Section~\ref{optimalColoringSec}, we present the principles and properties of the $Vector Method$; in Section~\ref{}}
\end{ShortVer}

\section{Problem Statement and Overview}\label{sec:notations-statement}
\subsection{Notations, Definitions and Assumptions}
%------------------------------------------
\label{sec:notations}
In the remaining of this research report, we use the following notations:
%\begin{itemize}
%\end{itemize}
\begin{itemize}
\item $\vec{UV}$ denotes a vector of extremities the nodes $U$ and $V$.% of $\ZZ^2$
\newcommand{\mynorm}[1]{|#1|}
\item $\mynorm{\vec{UV}}$ denotes the norm, or length of the vector $\vec{UV}$,
and $d(U,V)$ the euclidian distance between nodes $U$ and $V$.
\item $det(\vec{UV_1},\vec{UV_2})$ is the determinant of the two vectors.
\item $\vec{UV_1} \cdot \vec{UV_2}$ denotes the scalar product of the vectors.
% $\vec{UV_1},\vec{UV_2}$.
\item $\Lambda(\vec{UV_1},\vec{UV_2})$ denotes the lattice having as a base the vectors $(\vec{UV_1},\vec{UV_2})$. We call $\vec{UV_1}$ and $\vec{UV_2}$ the basis  (or generator) vectors. $\mathcal{P}(\vec{UV_1},\vec{UV_2})$ is the fundamental parallelotope associated to $\Lambda$. In 2 dimensions, $\mathcal{P}$ is a parallelogram. Moreover, the number of nodes in $\mathcal{P}(\vec{UV_1},\vec{UV_2})$ called {\em lattice determinant} is given by $det(\vec{UV_1},\vec{UV_2})$. These notations and definitions are adopted from~\cite{latticeDef}. We also denote $\mathcal{P}_{U}(\vec{UV_1},\vec{UV_2})$ the fundamental
parallelotope translated at node $U$.
%\item $frac(x)$ is the fractional part of $x$.
\item For $x, y$ in $\ZZ^2$, with $y\neq 0$, we denote "x modulo y" the integer z in $\{0, 1, \ldots |y|-1\}$ such that we have $x \equiv z (mod |y|).$
\end{itemize}
Moreover, the following definitions and assumptions are used:%
\begin{itemize}
\item {\bf Nodes:} The nodes are disposed in a grid. 
Without loss of generality, we assume that
the grid step is $1$, hence the set of nodes is identified by $\ZZ^2$.
\item
{\bf Neighborhood:}
$R$ denotes the radio range and is a real number $\ge 1$. 
%It is obvious that for network connectivity reasons, we consider only grids where the radio range is higher than or equal to one.
The underlying assumption is a unit disk model, assuming that nodes $U$ and $V$
can communicate if their euclidian distance is lower than $R$ and are able to communicate directly in both directions. Hence, the set of neighbors of a node $U$ is:\\
$\Neigh{U} = \{ V \in \ZZ^2  ~|~ 0 < d(U,V) \le R  \}$.
\item {\bf Coloring:} A coloring $\phi$ is a mapping from the nodes $\ZZ^2$ to a set of colors, identified with the set of integers $\{0, \ldots k-1 \}$, with $k$ a positive integer. 
\item {\bf Valid Coloring:}
a coloring is said to be a {\em valid $h$-hop coloring}, with $h$ an integer $\ge1$, 
when all nodes that are less or equal to $h$-hop away are assigned different colors.
\item {\bf Periodic Coloring:} A coloring is denoted ``periodic'' if the set of nodes with the same given color, is the lattice $\Lambda(v_1,v_2)$ for some fixed
vectors $v_1$, $v_2$ of $\ZZ^2$, after it is translated by some amount. In this paper, we consider a ``strict'' definition of periodicity because colors 
might otherwise be periodically repeated with a pattern generating more than a single lattice for instance.
%We consider $h$-hop coloring, with $h$ an integer $\geq1$. We are only interested in colorings exhibiting a periodic color pattern. In such colorings, the whole grid is colored by periodically reproducing the color pattern. As a consequence, an optimal $h$-hop coloring mentioned in this paper refers only to an optimal coloring in the class of periodic $h$-hop colorings.
\end{itemize}

\subsection{Problem Statement}
%------------------------------------
Our general goal is the following: \emph{Find a valid $h$-hop coloring of 
the nodes of the infinite grid $\ZZ^2$, with a minimal number of colors for $h\ge1$}. 
\newline
Because the set of colorings of $\ZZ^2$ is infinite, in this paper we 
restrict ourselves to colorings exhibiting a periodic color pattern
(following the definition of periodic coloring of section~\ref{sec:notations}).
\begin{quote}
{\bf Problem Statement:}
Find one of the valid periodic $h$-hop colorings with the minimum number of colors for $h\ge1$.
\end{quote}

\subsection{Overview}
%--------------------
In this paper, we propose the {\em periodic Vector-Based Coloring Method}, denoted {\bf VCM} that answers the problem
statement in polynomial time. This method consists in producing a periodic coloring obtained by tiling a color pattern. This color pattern is generated by two vectors verifying some conditions that will be detailed in the paper. \\
\newline
\begin{figure}[!h]
\begin{center}
{\includegraphics[width=0.9\linewidth]{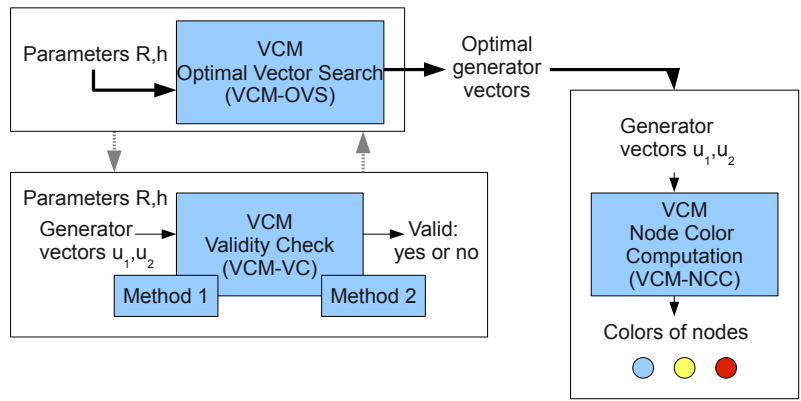}}
\caption{Components of VCM}\label{VCMcomponents}
\end{center}
\end{figure}
As illustrated in Figure~\ref{VCMcomponents}, {\em VCM} is composed of the following components: \\
%based on the principles {\bf P1, P2, P3, P4} and moreover is composed of the following
{\bf C1.} {\bf VCM-NCC} A periodic coloring based on two generator vectors. This component determines how any sensor node selects its color.\\
{\bf C2.} {\bf VCM-VC} An algorithm to check the validity of the periodic coloring: the VCM incorporates two methods to verify the validity of the coloring for a given couple of generator vectors.\\
{\bf C3.} {\bf VCM-OVS} An algorithm to search for optimal generator vectors, that is, yielding the minimum number of colors.
Indeed, we limit the set of candidate vectors to find the vectors providing the optimal number of colors.\\
\begin{LongVer}
\subsection{Intuitive Idea}
%--------------------------
The intuitive idea of the method is as follows. As the grid topology presents a regularity in terms of node position, {\em VCM} produces a similar regularity in terms of colors and generates a color pattern that can be periodically reproduced to color the whole grid. 
Our aim is to find such a color pattern that minimizes the number of colors used. More precisely, given a colored node $U$, we determine where its color can be reproduced. 
The method starts by the choice of two
vectors $\vec{UV_1}$ and $\vec{UV_2}$ such that $V_1$ and $V_2$ use the same color as $U$. Of course, the vectors $\vec{UV_1},\vec{UV_2}$ must provide a valid $h$-hop coloring.
The color pattern is given by the set of colors in the finite parallelogram $\mathcal{P}(\vec{UV_1},\vec{UV_2})$ translated at $U$. Hence, the color of $U$ is repeated also at any node $W$ where $\vec{UW}$ is a linear combination of $\vec{UV_1}$ and $\vec{UV_2}$.\\
\end{LongVer}
In the Sections from~\ref{sec:vcm-color-computation} to~\ref{sec:vcm-ovs}, we detail the components of {\em VCM}.\\

\section{Periodic Coloring}\label{periodicCol}\label{principlePeriodic}
%======================================================================
Figure~\ref{vectorMethodFig} presents an example of a periodic 3-hop coloring of a $10 \times 10$ grid with a transmission range $R=1$.
\begin{figure}[!h]
\begin{center}
{\includegraphics[width=0.5\linewidth]{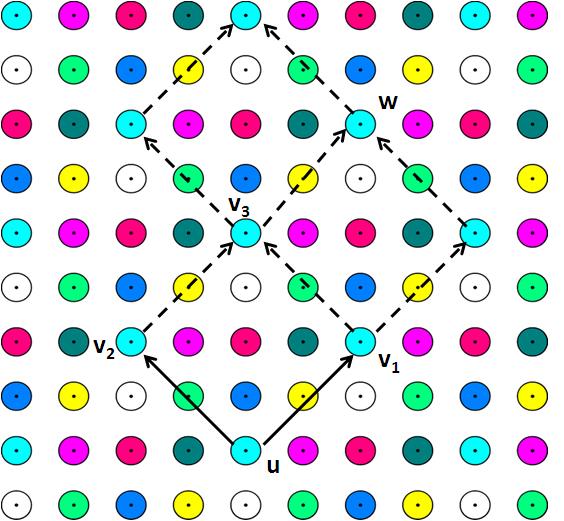}}
\caption{Example of periodic 3-hop coloring (R=1)}\label{vectorMethodFig}
\end{center}
\end{figure}
\newline
The principles of the periodic coloring are:\\
{\bf P1. (Generator vectors)} The two vectors $\vec{UV_1}$ and $\vec{UV_2}$, if linearly independent, generate the parallelogram $\mathcal{P}(\vec{UV_1},\vec{UV_2})$ of the color pattern.\\ % Hence the name of \textit{Vector Method}.\\
% $\vec{UV_1}$ and $\vec{UV_2}$ are called \textit{generator vectors}.\\
%{\bf P2.} The color pattern, periodically reproduced, must provide a valid coloring. In other words, no two nodes up to $h$-hop have the same color.\\
{\bf P2. (Parallelotope color unicity)} Inside $\mathcal{P}(\vec{UV_1},\vec{UV_2})$, there is no color reuse.\\
{\bf P3. (Lattice color repetition)} Because the periodic coloring is obtained by repeating the color pattern, the nodes 
$W$ such that $\vec{UW}$ is a linear combination of $\vec{UV_1}$ and $\vec{UV_2}$ have the same color as $U$. Precisely, the set of nodes $W$ having the same color as $U$ forms a \textit{lattice} of $\ZZ^2$ with generator vectors $\vec{UV_1}$ and $\vec{UV_2}$: the vector $\vec{UW}$ can be written as $\vec{UW}=\alpha\vec{UV_1}+\beta\vec{UV_2}$ with $\alpha$ and $\beta$ in $\ZZ^2$.\\
{\bf P4. (Coordinate-based color computation)} The grid can be seen as a tiling with the pattern $\mathcal{P_U}(\vec{UV_1},\vec{UV_2})$. Thus, each node $W$ belongs to its own parallelogram, and has coordinates relative to this parallelogram. Consequently, $W$ has the same color as any node having the same coordinates in its own parallelogram.\\

To be applied to a wireless sensor network, these principles have to be enhanced. For instance, since nodes having the same color can simultaneously access the wireless medium, validity of the coloring must be verified. Moreover, to ensure an efficient usage of the bandwidth, the number of colors used should be minimized. These criteria are taken into account in our work and progressively introduced in the paper.\\

In the following we detail the components of {\em VCM}. As previously said, we only consider grid colorings that periodically reproduce a color pattern. 

%\section{Optimal periodic coloring for any transmission range: the vector-based method}\label{optimalColoringSec}
%========================================================================
\section{VCM: Node Color Computation (NCC)}
%-------------------------------
\label{sec:vcm-color-computation}
\subsection{Assigning Colors to Nodes}

% Trouver un noeud qui a une couleur donn\'ee:
% http://www.nada.kth.se/~johanh/sms07.pdf
% http://www.mathematik.uni-erlangen.de/~ruppert/DA/KaibSchnorr_kaib.pdf

The node color computation (NCC) component of the VCM method takes as parameters two generator vectors $u_1$,$u_2$ (computed as in \refsec{sec:vcm-ovs}). Let $(x_1,y_1)$ and $(x_2,y_2)$ be their coordinates and let $d = \det(u_1,u_2)$.
Here we define two methods to compute the colors.\\
Actual computation on an example is provided in \refsec{sec:ex-ncc1-ncc2}.

\subsubsection{Method NCC1}
%--------------------------
\label{sec:NCC1}

\mybox{
\begin{method}[VCM-NCC 1]
%\label{mycolor} ZZZ
{\em VCM} assigns the color of a point $W$ % \in \ZZ^2
based on its coordinates $w=(x,y)$ by computing first the integer quantities $c_1(w), c_2(w)$ as in System~(\ref{colorcompute}), 
\begin{System}
\label{colorcompute}
c_1(w)= (xy_2 - yx_2) \mathrm{~modulo~} d\\
c_2(w)= (-xy_1 + yx_1) \mathrm{~modulo~} d 
\end{System}
and then using a bijective mapping between the couple $(c_1,c_2)$ and a color $\in \{0, 1, 2, \ldots |d|-1 \}$.
\end{method}
}

%\begin{property}
%\label{mycolor}
%According to the color pattern defined by vectors $\vec{UV_1}$ and $\vec{UV_2}$ with $d= \mid x_1y_2-x_2y_1 \mid$, the color of any node $W(x,y)$ is determined by the couple $(c_1,c_2)$:

%\end{property}

\begin{remark}
Defining $c_2$ as $c_2(w) \triangleq (xy_1 - yx_1) \mathrm{~modulo~} d$, instead
of the definition in (\refeq{colorcompute}), would be more symmetrical compared
 to $c_1$, and would yield identical results (as it is a trivial
transformation: $c_2(w) \rightarrow (-c_2(w))\mathrm{~modulo~} d$).
\end{remark}

\begin{remark} 
\label{rem:d-positive}
In the remaining of this report, we will assume that $d > 0$ without loss of 
generality:
indeed, if $d < 0$, it is sufficient to use the vectors $(-u_1,u_2)$ instead
of $(u_1,u_2)$ and the results are similar  ; notice that 
$c_1(w,u_1,u_2) = c_1(-w,-u_1,u_2)$, etc. (hence change of sign of $u_1$ is 
equivalent to an origin symmetry of the coloring).
This avoids minor technicalities on the definition of the 
modulo, integer part, fractional part, when numbers are negative (for which
definitions are not universal).
\end{remark}

\begin{property}
\label{prop:ncc1}
With the previous coloring VCM-NCC1, it is indeed possible to define a bijection from $(c_1(w),c_2(w))$ to $\{0,1,2, \ldots |d|-1 \}$. Moreover, the coloring verifies 
principles {\bf P3} and {\bf P4} defined in Section~\ref{principlePeriodic}. %\ref{sec:principles}.
\end{property}
\proof
\begin{LongVer}
We assume that $d>0$ without loss of generality
(see Remark~\ref{rem:d-positive}).
We need to prove that the set $\{ (c_1(w),c_2(w)) ~|~ w \in \ZZ^2 \}$ has cardinality $d$ and that the set of nodes with the same color is exactly the lattice $\Lambda(u_1,u_2)$ translated at $w$.

Let $W$ be a grid point of coordinates $w = (x,y)$. %, and $w$ the vector of extremities the origin node and $W$. 
Performing a change of vector basis in $\mathbb{R}^2$ from $\{ (1,0), (0,1)\}$ to $\{u_1,u_2\}$, 
the new coordinates $(\alpha, \beta) \in \mathbb{R}^2$ of W in $\Lambda(u_1,u_2)$ verify $w = \alpha u_1 + \beta u_2$ and 
\begin{System}
\alpha = \frac{\det(w,u_2)}{d}\\
\beta = \frac{\det(u_1,w)}{d}
\label{eq:coord-change}
\end{System}
with $d = det(u_1,u_2)$.

Let $\alpha'$ and $\beta'$ be the integer parts of $\alpha,\beta$, i.e.,
$\alpha' = \lfloor \alpha \rfloor$ and $\beta' = \lfloor \beta \rfloor$.
For arbitrary nonzero integers $\lambda,\mu$, with also $\mu > 0$, we have the identity:
$\frac{\lambda}{\mu} = \lfloor \frac{\lambda}{\mu} \rfloor + \frac{\lambda \mathrm{~modulo~} \mu}{\mu}$.
%In addition, $((-z)~\mathrm{modulo}~d) = d-1-(z~\mathrm{modulo}~d)$.
%
%and let $W'$ be the point of coordinates $w' = \alpha' u_1 + \beta' u_2$.
%the closest grid point in $\Lambda(\vec{UV_1},\vec{UV_2})$.
%\\ We denote $\vec{UW'}= \alpha'\vec{UV_1}+\beta'\vec{UV_2}$.\\
%We then have: $\vec{UW}=\vec{UW'}+W',W) = \alpha'\vec{UV_1}+\beta'\vec{UV_2}+(W,W')$.
%
%Writing $w = w' + (w-w')$, and then using the fact
Thus \refeq{eq:coord-change} becomes:
\begin{System} 
\alpha = \alpha'+ \frac{\det({w},u_2) \mathrm{~modulo~} d}{d} = \alpha' + \frac{c_1(w)}{d}\\
\beta = \beta'+ \frac{\det(u_1,w) \mathrm{~modulo~} d}{d} = \beta' + \frac{c_2(w)}{d}
\label{eq:coord-change2}
\end{System}
Let $W'$ the point with coordinates $w' = (\alpha', \beta')$.
%, and $w'$ the vector of extremities the origin node and $W'$. 
$W'$ is on the lattice since $\alpha',\beta'$ are integers, 
and observe that $W$ is in fact inside the parallelogram of the lattice
$\Lambda(u_1,u_2)$ placed at node $W'$ (i.e. inside the parallelogram defined
by the $4$ points of the lattice:
$w'$, $w'+u_1$, $w'+u_2$, $w'+u_1+u_2$). Then (\refeq{eq:coord-change2})
means simply that $(\frac{c_1(w)}{d}, \frac{c_2(w)}{d})$ are the 
coordinates of $W$ relative to this parallelogram
(with the basis vectors $u_1,u_2$).
\end{LongVer}
\begin{ShortVer}
In fact, $(\frac{c_1(W)}{d}, \frac{c_2(W)}{d})$ are the coordinates of $W$ relative in the parallelogram with basis vector $(u_1,u_2)$ (see \cite{RR-VCM} for the complete proof).
\end{ShortVer}

Since there is a bijection between the set of coordinates of nodes in a parallelogram of $\Lambda(u_1,u_2)$ and the nodes themselves;
and since $(\frac{c1(w)}{d}, \frac{c2(w)}{d})$ are these coordinates, 
we have the
two properties: 1) there are exactly $d$ possible values of $(c_1,c_2)$ (because there
are exactly $d$ nodes in the parallelogram), and 2) no two nodes inside
the parallelogram have the same values $c1,c2$ since these are their coordinates, relative to one vertex of the parallelogram.
\endproof

\begin{lemma}\label{lem:colorRep}
With the color computation given by the System~\refeq{colorcompute}, the color of the node $U$ is repeated at the nodes $W$ with coordinates verifying:
$w = \alpha u_1+\beta u_2$, for some $(\alpha,\beta)$ $\in$ $\ZZ^2$.
%where $w$ is the vector with extremities the origin and $W$. 
\end{lemma}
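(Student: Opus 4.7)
The plan is to reduce the claim to a computation on the color coordinates and then exploit the bilinearity of the determinant. Setting $v = w - u$, the goal is to show that $U$ and $W$ share the same color (i.e.\ $c_1(w) = c_1(u)$ and $c_2(w) = c_2(u)$) if and only if $v = \alpha u_1 + \beta u_2$ for some $(\alpha,\beta) \in \ZZ^2$.

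First I would rewrite the quantities defining the coloring in determinant form, namely $c_1(w) = \det(w,u_2) \bmod d$ and $c_2(w) = \det(u_1,w) \bmod d$, which follows directly from System~(\ref{colorcompute}) with $u_1=(x_1,y_1)$ and $u_2=(x_2,y_2)$. Using the bilinearity of the determinant, I obtain
\[
c_1(w) - c_1(u) \equiv \det(v,u_2) \pmod d, \qquad c_2(w) - c_2(u) \equiv \det(u_1,v) \pmod d.
\]
So the coloring coincides on $U$ and $W$ iff $\det(v,u_2)$ and $\det(u_1,v)$ are both multiples of $d$.

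For the forward implication, if $v = \alpha u_1 + \beta u_2$ with $\alpha,\beta \in \ZZ$, then $\det(v,u_2) = \alpha d$ and $\det(u_1,v) = \beta d$, both clearly $\equiv 0 \pmod d$, so the colors match. For the converse, since $u_1,u_2$ are linearly independent (recall $d \neq 0$), any $v \in \ZZ^2 \subset \RR^2$ decomposes uniquely as $v = \alpha u_1 + \beta u_2$ with $\alpha,\beta \in \RR$, and applying the determinant identity shows $\alpha = \det(v,u_2)/d$ and $\beta = \det(u_1,v)/d$. The divisibility conditions $\det(v,u_2) \equiv 0 \pmod d$ and $\det(u_1,v) \equiv 0 \pmod d$ therefore force $\alpha,\beta \in \ZZ$, so $v \in \Lambda(u_1,u_2)$.

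No step is really the hard part, because Property~\ref{prop:ncc1} has already done the essential work: it established that $(c_1(w)/d, c_2(w)/d)$ are the coordinates of $W$ inside the fundamental parallelotope it belongs to, from which the lemma also follows, since two grid nodes have identical coordinates relative to their parallelograms precisely when they differ by a lattice vector. I would present the determinant-based argument above as the cleanest self-contained verification, and mention that the lemma may alternatively be read off from Property~\ref{prop:ncc1}.
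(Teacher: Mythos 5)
Your determinant-based argument is correct, but it is not the route the paper takes: the paper's proof of Lemma~\ref{lem:colorRep} is a two-line appeal to Property~\ref{prop:ncc1}, observing that since $(c_1(w)/d,\,c_2(w)/d)$ are the coordinates of $W$ relative to the parallelogram it lies in, nodes whose difference is a lattice vector have the same relative coordinates and hence the same color --- exactly the ``alternative'' reading you mention at the end. Your primary argument instead verifies the claim by direct computation: writing $c_1(w)=\det(w,u_2)\bmod d$ and $c_2(w)=\det(u_1,w)\bmod d$ and using bilinearity to reduce color equality to the divisibility of $\det(v,u_2)$ and $\det(u_1,v)$ by $d$, with $v=w-u$. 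This buys two things the paper's one-liner does not make explicit: it is self-contained (independent of the bijection machinery in Property~\ref{prop:ncc1}), and it proves the converse as well, i.e.\ that the color of $U$ is repeated \emph{only} at lattice points --- the ``exactly the lattice'' statement that the paper needs elsewhere and obtains separately. Indeed, the linearity-modulo-$d$ observation you exploit is precisely the one the paper defers to the proof of Lemma~\ref{lem:origin-check-equiv}. Both arguments are sound; yours trades the paper's brevity for a sharper, biconditional statement.
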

\proof
Actually, by construction, the color of a node is given by its coordinates relative to  the parallelogram it belongs to. Hence, the color of any node $U$ is reused at nodes $W$, such that $\vec{UW}= \alpha u_1+\beta u_2$ for all $(\alpha, \beta) \in \ZZ^2$, which have the same relative coordinates.
\endproof
%We deduce from this lemma that there is no color reuse inside the parallelogram.
We deduce that {\em VCM-CC1} provides a coloring that is really consistent with the principle of the method as described in Section~\ref{principlePeriodic}.%\ref{sec:principles}.

\subsubsection{Example of bijection for NCC1}
\label{sec:example-bijection}
One of the steps of the Method NCC~1, is that a bijection needs to be established 
between the set of values $\{ (c_1(w), c_2(w)) ~|~ w \in \ZZ^2 \}$ and
the set of colors $\{0,1,\ldots, d-1 \}$: an example of bijection is provided
in this section.

A bijection can be constructed by computing the values of
$(c_1(w),c_2(w))$ for any node in $\mathcal{P}(u_1,u_2)$ in a
list, sorting the list by lexicographical order, and, finally, setting
the color associated with a couple $(c_1(w), c_2(w))$ to be its index
in the sorted list minus 1.

Example: if $(0,0)$ appears as the $1^{st}$
item of the sorted list [as it can be proved it will], 
the color assigned to that couple 
is $0$. Then, for instance, for the point $W$ of coordinates $w=(x_1,y_1)$,
in other terms $w=u_1$, we have $(c_1(w), c_2(w)) = (0,0)$ and therefore the 
color assigned to $W$ is $0$.

Note that, from a pure implementation point of view, it may be difficult to 
enumerate exactly the points of 
$\mathcal{P}(u_1,u_2)$, but then, instead, it is sufficient to enumerate all the
nodes in a superset, the \emph{bounding box} of $\mathcal{P}(u_1,u_2)$, 
itself computed from 
its four vertices $O = (0,0)$, $O+u_1$, $O+u_2$ and $O+u_1+u_2$. Computing
the set of values $(c_1(w), c_2(w))$ for the points in the bounding box,
will yield all possible values for any $w \in \ZZ^2$.

\subsubsection{Method NCC2}
%--------------------------

Method 2 is derived from the first method ; the difference is that it
also establishes a direct bijection (and hence avoids the need for constructing
a bijection as in section~\ref{sec:example-bijection}):
it proceeds to a direct
computation of the color based on the node coordinates without computing the
colors of the other nodes on the parallelogram.

As for Method 1, we assume that we are given two generator vectors $u_1, u_2$
with coordinates $(x_1, y_1)$ and $(x_2, y_2)$, and that we are computing
the color of node of coordinates $w = (x,y)$. 

In addition to the notations and definitions used previously, we introduce
the following ones:
\begin{itemize}
\item $g_1$ is the greatest common divisor of $(x_1,y_1)$; $g_1=gcd(x_1,y_1)$.
Similarly, $g_2=gcd(x_2,y_2)$, with the convention $gcd(a,0) = gcd(0,a) = a$
\item $v_1$ is the vector $=\frac{1}{g_1}u_1$ and $v_2$ is the vector $=\frac{1}{g_2}u_2$. Notice that the coordinates of $v_1$ and $v_2$ are coprimes.
\item Let $d'= det(v_1,v_2)$. We have: $d'=\frac{d}{g_1g_2}$
\item Let $(\alpha'(w),\beta'(w)) \in \RR^2$ be the coordinates of the node $w$ relative to the basis $(v_1,v_2)$, when performing a change of basis in $\RR^2$ from $(1,0), (0,1)$ to $(v_1,v_2)$. We have: $\alpha'(w) =\frac{det(w,v_2)}{d'}$ and $\beta'(w) =\frac{det(v_1, w)}{d'}$.
\item Let $x'(w) = \lfloor \alpha'(w)\rfloor \mathrm{~modulo~} g_1$  
\item Let $y'(w) = \lfloor \beta'(w)\rfloor \mathrm{~modulo~} g_2$
\item Let $c'(w)= det(w,v_2)  \mathrm{~modulo~} d'$
%% For simplicity reasons, we denote the three last variables: $x'$, $y'$, $c'$.
\end{itemize}
~\\
\mybox{
\begin{method}[VCM-NCC 2]
%\label{mycolor} ZZZ
Using the notations mentioned above, a color of a node with coordinates 
$w=(x,y)$ is equal to 
\begin{equation}
c(w) = c'(w) + d' x'(w)+ d'g_1y'(w)
\end{equation}
%with $x', y', g_1, g_2$ and $c'$ as defined previously.
which is an integer in $\{0,1, 2, \ldots |d|-1 \}$
\end{method}
}

As in remark~\ref{rem:d-positive}, we now assume that $d>0$ without loss of generality.

The idea of the method VCM-NCC 2 is as follows:
\begin{itemize}
\item In method VCM-NCC 1,
we did not identify nor did make use of the special structure of 
$E_c = \{ (c_1(w), c_2(w)) ~|~ w \in \ZZ^2 \}$,  which is in fact a subgroup of
$(\frac{\ZZ}{d \ZZ})^2$ ; hence we did not provide an explicit mapping from 
$E_c$
to $\{0, 1, 2, \ldots d - 1 \}$. This is the problem that method 2 addresses:
\item In method VCM-NCC 2,
by dividing
the vectors $u_1$ and $u_2$ by the respective gcd of their coordinates,
we obtain vectors $v_1, v_2$, and we find that applying method VCM-NCC 1
with these vectors, the set of values $\{ c_1(w, v_1, v_2), c_2(w,v_1,v_2)
~|~ w \in \ZZ^2 \}$ has good properties, allowing the construction of a direct
bijection (see lemma~\ref{lem:c-prime-is-a-direct-coloring}).
\item However using method VCM-NCC 1 with vectors $v_1,v_2$ yields a
coloring of $\ZZ^2$ repeated by translation by $v_1$ and $v_2$ ;
whereas we wanted a coloring repeated by translation by $u_1$ and $u_2$
(which are larger). Notice that the problem here is that method VCC-NCC 1 with
generator vectors $v_1,v_2$ is quite possibly an invalid $h$-hop coloring
of $\ZZ^2$, even if $u_1, u_2$ are vectors defining a valid $h$-hop coloring.
For this reason, method VCM-NCC 2 constructs a coloring based on
method VCM-NCC 1, but modified, by tiling the parallelogram
$\mathcal{P}(v_1,v_2)$ several times inside the parallelogram
$\mathcal{P}(u_1,u_2)$ and changing the colors in each internal tile 
(see figure~\ref{fig:NCC2}).
%Then  essentially, the coloring of NCC 2 is a tiling of $\ZZ^2$ with these 
%colors obtained for $\mathcal{P}(u_1,u_2)$.
\end{itemize}

\begin{figure}[!h]
\begin{center}
{\includegraphics[width=0.99\linewidth]{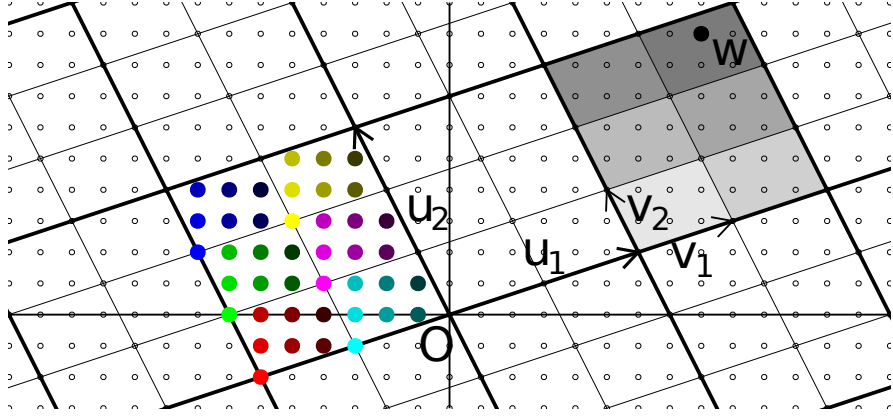}}
\caption{Method VCM-NCC 2}\label{fig:NCC2}
\end{center}
\end{figure}

Method 2 is in fact the combination of two functions; $c(w) = b_3(c_3(w))$ with:
\begin{itemize}
\item The first one, $c_3$, from 
$\ZZ^2 $ to $ \{0, 1, .. d'-1 \} \times \{0, 1, .. g_1-1 \} \times 
  \{0, 1, .. g_2-1 \} $, which associates to $w$
the value  $c_3(w) = (c'(w), x'(w), y'(w))$.
\item
 The second is bijection $b_3$
from $\{0, 1, \ldots d'-1 \} \times \{0, 1, \ldots g_1-1 \} \times 
\{0, 1, \ldots g_2-1 \} $
to $\{ 1, 2, \ldots d \}$ by transforming $(j,k,l) \rightarrow b_3(j,k,l) =  j + d' k + d' g_1 l $  ; note that $d = g_1g_2d'$ by definition.
\end{itemize}

The transformation $b_3$ is obviously a bijection from its domain to its 
codomain\footnote{informally,
it is similar to a
 transforming the time of the day expressed as h:m:s -- $h$ hour $m$ minutes $s$ seconds -- into the time of the day expressed as seconds from midnight as $(h,m,s) \rightarrow 3600 \times h + 60 \times m + s$}, and therefore the crux of method VCM-NCC 2 is the function $c_3$.

We start with the following lemma for the function $c'(w)$, which is
in fact $c_1(w,v_1,v_2)$:
\begin{lemma}
\label{lem:c-prime-is-a-direct-coloring}
Let $c_1(w,v_1,v_2)$ be the value ``$c_1(w)$'' obtained when applying method
VCM-NCC~1 with generator vectors $(v_1,v_2)$, and define $c_2(w,v_1,v_2)$
similarly.\\
Then the function $w \rightarrow c_1(w,v_1,v_2)$ is a direct coloring of the 
nodes $w$, where the color is in $\{0, 1, \ldots d'-1 \}$.
%%The function $w \rightarrow c'(w)$ is a bijection between the nodes in
%%the parallelotope
%%$\mathcal{P}(v_1,v_2)$ and $\{ 0, 1, \ldots d'-1 \}$
%There is a bijection between $E_2$ and $\{0,1,2, \ldots |d'|\}.$
\end{lemma}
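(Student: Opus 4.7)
The plan is to show that the map $\Phi \colon \ZZ^2 \to \ZZ/d'\ZZ$ defined by $\Phi(w) = c_1(w, v_1, v_2) = \det(w, v_2) \bmod d'$ descends to a bijection from the quotient $\ZZ^2/\Lambda(v_1, v_2)$ onto $\ZZ/d'\ZZ = \{0, 1, \ldots, d'-1\}$. This is exactly what ``direct coloring'' means here: the scalar $c_1$ alone separates the $d'$ cosets of $\Lambda(v_1, v_2)$, so no secondary coordinate $c_2$ or extra lexicographic bijection is required, contrary to the general situation in Method VCM-NCC~1.

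First I would observe that $\Phi$ is a group homomorphism, since $\det(\cdot, v_2)$ is linear and the reduction modulo $d'$ is a ring quotient. Next, I would check that $\Lambda(v_1, v_2) \subseteq \ker \Phi$: this is immediate from $\det(v_1, v_2) = d' \equiv 0 \pmod{d'}$ and $\det(v_2, v_2) = 0$. The nontrivial step is to establish the reverse inclusion, $\ker \Phi \subseteq \Lambda(v_1, v_2)$, and this is where the hypothesis ``$v_2$ has coprime coordinates'' does the work. I would argue as follows: if $\Phi(w) = 0$, then $\det(w, v_2) = k d' = \det(k v_1, v_2)$ for some $k \in \ZZ$, hence $\det(w - k v_1, v_2) = 0$, so $w - k v_1$ is parallel to $v_2$, say $w - k v_1 = \lambda v_2$ with $\lambda \in \RR$. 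Since both coordinates of $w - k v_1$ are integers and $\gcd(x_2', y_2') = 1$, a standard number-theoretic argument (Bézout) forces $\lambda \in \ZZ$. Thus $w \in \Lambda(v_1, v_2)$.

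Having $\ker \Phi = \Lambda(v_1, v_2)$, the induced map $\overline{\Phi} \colon \ZZ^2/\Lambda(v_1, v_2) \to \ZZ/d'\ZZ$ is injective. For surjectivity, I would either invoke a cardinality count (both sides have exactly $d'$ elements, by the lattice determinant on the left and by definition on the right) or, alternatively and more concretely, note that $\Phi(e_1) = y_2'$ and $\Phi(e_2) = -x_2'$ generate a subgroup of $\ZZ/d'\ZZ$ of index dividing $\gcd(x_2', y_2') = 1$, hence the whole group. Either way, $\overline{\Phi}$ is a bijection, establishing that the values of $c_1(w, v_1, v_2)$ form a coloring of $\ZZ^2$ with exactly $d'$ distinct colors, one per coset of $\Lambda(v_1, v_2)$.

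The only real obstacle is the kernel-equals-lattice step; everything else is formal. The coprimality assumption on $v_2$ is indispensable, as a trivial scaling counterexample shows: replacing $v_2$ by $2v_2$ would multiply $d'$ by $2$ but leave $\Phi$ divisible by $2$, so $\Phi$ would no longer be surjective onto $\ZZ/d'\ZZ$. Once this step is handled, the lemma follows, and it provides the technical bridge that makes the combined map $c_3(w) = (c'(w), x'(w), y'(w))$ in Method VCM-NCC~2 a well-defined component-wise coloring whose composition with the bijection $b_3$ yields the compact color value in $\{0, 1, \ldots, d-1\}$.
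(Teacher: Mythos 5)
Your proposal is correct, and it reaches the conclusion by a genuinely different route than the paper. The paper splits the work into two steps: \textbf{(S1)} surjectivity of $w \mapsto c_1(w,v_1,v_2)$ onto $\{0,\ldots,d'-1\}$, obtained exactly as in your second surjectivity alternative --- B\'ezout applied to the coprime coordinates of $v_2$ produces a point $w_0$ with $\det(w_0,v_2)=1$, hence $c_1(w_0,v_1,v_2)=1$, and linearity of $c_1$ does the rest; and \textbf{(S2)} a pigeonhole step: Property~\ref{prop:ncc1} already guarantees that the set $E_c$ of pairs $(c_1,c_2)$ has cardinality $d'$ and that its fibers are the cosets of $\Lambda(v_1,v_2)$, so the projection $p_x : E_c \to \{0,\ldots,d'-1\}$, being a surjection between two sets of equal cardinality $d'$, is a bijection, and therefore $c_1$ alone separates the cosets. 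You instead work directly with the homomorphism $\Phi(w)=\det(w,v_2) \bmod d'$ and compute its kernel, using coprimality of the coordinates of $v_2$ to force the parallelism coefficient $\lambda$ to be an integer; injectivity on cosets is then immediate, and surjectivity follows from the index count $[\ZZ^2 : \Lambda(v_1,v_2)] = d'$ (or from your generator argument). In effect the two proofs place the counting argument at dual positions: the paper proves surjectivity directly and deduces injectivity by cardinality, while you prove injectivity directly (kernel equals lattice) and deduce surjectivity by cardinality. Your version is self-contained --- it does not lean on Property~\ref{prop:ncc1} --- and it makes explicit where coprimality is indispensable; your scaling counterexample is consistent with the paper's subsequent remark that, for the choice $c'=c_1$, only $v_2$ is required to have coprime coordinates. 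The paper's version is shorter because it reuses the already-established fiber structure of $E_c$ rather than recomputing it.
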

% Because
%$c'(w) = c_1(w,v_1,v_2)$, the lemma establishes in fact that 
%for the specific vectors $(v_1,v_2)$, $c_1(w,v_1,v_2)$ of VCM-NCC~1 computes 
%directly a color as an integer in $\{0, 1, \ldots d'-1 \}$, and 
This lemma establishes that for the specific generator vectors $(v_1,v_2)$,
$c_1$ computes directly a color and $c_2$ is essentially redundant.\\
\proof
We prove this lemma in two steps: \textbf{(S1)} proves that the set $\{ c_1(w,v_1,v_2) ~|~ w \in \ZZ^2 \}$ covers exactly integers in $\{ 0, 1, \ldots d'-1 \}$ , and \textbf{(S2)} the coloring provided by $c_1$ has the same properties as $VCM-NCC1$. To prove this, we prove that there is a bijection between $\{ c_1(w,v_1,v_2) ~|~ w \in \ZZ^2 \}$ and $E_c = \{ (c_1(w,v_1,v_2), c_2(w,v_1,v_2)) ~|~ w \in \ZZ^2 \}$.\\

\textbf{(S1)} Consider the vector $u_2$. By definition, $g_2$ is the gcd of its coordinates
$(x_2,y_2)$. Applying B\'ezout's identity (\cite{bib:bezout}), there
exist integers\footnote{this is even true when $x_2=0$ or $y_2=0$ with our convention $gcd(a,0)=gcd(0,a)=a$} $(\lambda,\mu) \in \ZZ^2$, such that:
\begin{equation}
\label{eq:bezout}
\lambda x_2 + \mu y_2 = g_2
\end{equation}

Let $w_0$ be the vector of coordinates $(\mu, -\lambda)$ and let
$(x_2',y_2')$ be the coordinates of $v_2$ (recall that $v_2$
is defined as $v_2 = \frac{1}{g_2}u_2$).
Because $v_2 = \frac{1}{g_2}u_2$, we have $x_2' = \frac{x_2}{g_2}$
and $y_2' = \frac{y_2}{g_2}$ and hence equation (\ref{eq:bezout}) becomes:
%\begin{equation}
%\label{eq:bezout}
$$\lambda x_2' + \mu y_2' = 1$$
%\end{equation}
$$\implies det(w_0,v_2) = 1$$  %\mathrm{~~by~definition}
$$\implies c_1(w_0,v_1,v_2) = 1$$
%$$\implies c_1(\nu w_0,v_1,v_2) = \nu \mathrm{~modulo~} d' ~\mathrm{~for~any~integer~}\nu \in \ZZ$$

Since $c_1$ is a linear map:
%(in the ring $\frac{\ZZ}{d'\ZZ}$),
 $c_1(2 w_0,v_1,v_2) = 2$, etc. and by applying it to $0, w_0, 2 w_0, \ldots (d'-1) w_0$, we get
exactly:
$$\{ c_1(w,v_1,v_2) ~|~ w \in \ZZ^2 \} = \{ 0, 1, \ldots d'-1 \}$$.

\textbf{(S2)} Let $E_c$ be the set
$E_c = \{ (c_1(w,v_1,v_2), c_2(w,v_1,v_2)) ~|~ w \in \ZZ^2 \}$.
From properties previously proven for method VCM-NCC~1 
(including property~\ref{prop:ncc1}), we know that there exists
a bijection from $E_c$ to the set $\{ 0, 1, 2, \ldots d'-1 \}$ ; in other words,
that its cardinality verifies $|E_c| = d'$.

Now consider the projection $p_x : (x,y) \rightarrow x$. The previous
relation shows that $p_x(E_c) = \{ 0, 1, \ldots d'-1\}$ hence it is a bijection
because it is surjective between two sets of cardinality $d'$. 

Therefore we have explicitly found one bijection 
(precisely: $k,l \rightarrow p_x(k,l)$) compared
to the method VCM-NCC~1 where we only 
established the existence of such a bijection. 
The end result, is that when applying
VCM-NCC~1 with the generator vectors $(v_1,v_2)$ and with this bijection $p_x$,
the color of a node $w$ is $c_1(w,v_1,v_2)$. The coloring provided by $c_1$ inherits from VCM-NCC~1, all the properties listed in \refsec{principlePeriodic}.
%Notice that by adding $1$, we can trivially number the colors starting from one as in the definition of the method VCM-NCC~1. 
\endproof

%Recall that $c'(w) = c_1(w,v_1,v_2)$ in method VCM-NCC~1 and that, with
%theorem 
%\begin{eqnarray*}
%c'(w)&= &det(w,v_2)  \mathrm{~modulo~} d'.\\
%&  = & (xy_2-yx_2)  \mathrm{~modulo~} d'.
%\end{eqnarray*}
%As $g_1=gcd(x_1,y_1)$, applying the theorem of Bachet-Bezout~\cite{bezout}, $\exists\ a, b$ two integers such as: $ax_1+by_1=g_1$.
%Consequently, $\exists\ a', b'$ two integers such that $a'x_1+b'y_1=1$. We can find $d'$ nodes $W$ with coordinates $(\mu b',-\mu a')$ such as $c'(W)=\mu,\ \forall \mu \in [0,..,d'-1]$.
%Moreover, as $0 \le c'<\ d'$, we can say that $c$' takes exaclty the integers v%alues in $\{1,2, \ldots |d'|\}$. Hence the theorem.
%\endproof
~\newline
The next step is to build a coloring repeated by translation of $u_1$ and $u_2$,
from the coloring VCM-NCC~1. As illustrated in Figure~\ref{fig:NCC2}, the
idea is that $v_1,v_2$ are the basis of the parallelogram $\mathcal{P}(v_1,v_2)$
that can be tiled to cover the parallelogram $\mathcal{P}(u_1,u_2)$ ; this
is possible because $u_1 = g_1 v_1$ and $u_2 = g_2 v_2$ (where $g_1$ and $g_2$
are integers). Inside a parallelogram $\mathcal{P}(v_1,v_2)$, the nodes
are assigned the same color as given by VCM-NCC~1 with $v_1,v_2$, except that
an offset is added, depending in which tile the node is located. 

Concretely,
in Figure~\ref{fig:NCC2}, the parallelogram $\mathcal{P}_{u_1}(u_1,u_2)$ at the
right side of the picture is tiled with $6$ smaller parallelograms which are
versions of $\mathcal{P}(v_1,v_2)$, filled by different shades of gray.
The nodes in one small gray parallelogram are allocated the colors
$0, 1, 2 \ldots d'-1$ ; the ones of the next small gray parallelogram are 
allocated the colors $d', d'+1, \ldots 2d'-1$; and generally the nodes
of the $k^{th}$ parallelogram are allocated the colors 
$kd', kd'+1, \ldots (k+1)d'-1$. The actual coloring is then something like the colored nodes at the left of the Figure~\ref{fig:NCC2}.

Thus, the central idea is to identify in which sub-parallelogram a node
is located: this is given by $x'(w),y'(w)$, which are in fact
the coordinates of the smaller parallelogram on the basis $(v_1,v_2)$
counted relative to the larger parallelogram.
For instance, in Figure~\ref{fig:NCC2}, the point $w$ is inside a
sub-parallelogram of coordinates $(x'(w), y'(w)) = (1,2)$ relative
to the bigger parallelogram.

This is the informal idea behind the quantities defined previously: in the
following lemma, we now prove formally that VCM-NCC~2 is a coloring satisfying
the desired properties.

\begin{lemma}
The coloring provided by the method VCM-NCC~2 satisfies the principles
{\bf P1}, {\bf{}P2}, {\bf{}P3} and {\bf{}P4} from \refsec{principlePeriodic}.
\end{lemma}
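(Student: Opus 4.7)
The plan is to reduce everything to the structural decomposition $c = b_3 \circ c_3$ already exhibited by the authors, where $b_3$ is a manifest bijection from $\{0,\ldots,d'-1\}\times\{0,\ldots,g_1-1\}\times\{0,\ldots,g_2-1\}$ onto $\{0,\ldots,d-1\}$ (recalling $d = d' g_1 g_2$). Because $b_3$ is a bijection applied pointwise, all four principles will follow once I establish two facts about $c_3(w) = (c'(w), x'(w), y'(w))$: (a) invariance under translation by $u_1$ and by $u_2$, and (b) injectivity on a fundamental domain, i.e.\ on $\mathcal{P}(u_1,u_2)\cap\ZZ^2$.

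For (a), I would perform a direct component-by-component computation. Writing $\alpha'(w)$ and $\beta'(w)$ for the coordinates of $w$ in the basis $(v_1,v_2)$, the translation $w \mapsto w + u_1$ acts as an exact integer shift: $\alpha'(w+u_1) = \alpha'(w) + g_1$ and $\beta'(w+u_1) = \beta'(w)$, since $u_1 = g_1 v_1$; symmetrically for $u_2$. The floor commutes with integer shifts, hence $x'(w+u_1) = (\lfloor \alpha'(w) \rfloor + g_1)\,\mathrm{modulo}\,g_1 = x'(w)$, and $y'(w+u_1) = y'(w)$. For $c'$, linearity of the determinant gives $\det(w+u_1, v_2) = \det(w,v_2) + g_1 \det(v_1,v_2) = \det(w,v_2) + g_1 d'$, so $c'(w+u_1) = c'(w)$. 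The analogous check for $u_2$ closes (a), which is precisely principle~\textbf{P3}.

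For (b), I would exploit the geometric picture of Figure~\ref{fig:NCC2}: the large parallelogram $\mathcal{P}(u_1,u_2)$ is tiled by the $g_1 g_2$ translates of the small parallelogram $\mathcal{P}(v_1,v_2)$ indexed by $(i,j) \in \{0,\ldots,g_1-1\}\times\{0,\ldots,g_2-1\}$. The pair $(x'(w), y'(w))$ records exactly which tile contains $w$, and by Lemma~\ref{lem:c-prime-is-a-direct-coloring}, $c'(w)$ is a direct coloring inside each such tile taking all $d'$ values injectively. Hence $c_3$ is injective on $\mathcal{P}(u_1,u_2)\cap\ZZ^2$ and attains exactly $d' g_1 g_2 = d$ distinct values; composing with $b_3$ yields principles \textbf{P1} and \textbf{P2}. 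Principle~\textbf{P4} then follows by combining injectivity with the invariance from (a): two nodes with the same relative position in their respective translates of $\mathcal{P}(u_1,u_2)$ differ by an element of $\Lambda(u_1,u_2)$ and hence share a color.

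The only subtlety is the interplay between floor and modulo when $\alpha'(w)$ is irrational, but since we only ever add an integer amount $g_1$ to $\alpha'(w)$, the identity $\lfloor \alpha'(w) + g_1 \rfloor = \lfloor \alpha'(w) \rfloor + g_1$ is immediate and no sign case-split is needed (using Remark~\ref{rem:d-positive}). The main expository obstacle is the tiling argument for injectivity, which must be proved rather than merely asserted pictorially. I would do this by observing that on each sub-tile both $\lfloor\alpha'(w)\rfloor$ and $\lfloor\beta'(w)\rfloor$ are constant, taking distinct values modulo $g_1$ and $g_2$ on distinct sub-tiles, so that $(x'(w), y'(w))$ indexes the sub-tile injectively, while $c'(w)$ refines the location within each sub-tile by Lemma~\ref{lem:c-prime-is-a-direct-coloring}.
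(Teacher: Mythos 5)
Your proof is correct, and it reorganizes the argument in a way that differs from the paper's. The paper works directly with the fibers of the coloring: it takes two arbitrary nodes $w_1,w_2$ with $c(w_1)=c(w_2)$, uses $c'(w_1)=c'(w_2)$ to write $w_1-w_2=\lambda v_1+\mu v_2$, decomposes $\lambda,\mu$ by quotient and remainder modulo $g_1,g_2$, and then extracts $r_1=r_2=0$ from $x'(w_1)=x'(w_2)$ and $y'(w_1)=y'(w_2)$ --- a step that forces it into a slightly delicate manipulation of $\lfloor \alpha'(w_1)-\alpha'(w_2)\rfloor$ depending on which fractional part is larger. It then checks the converse implication separately, concluding that the same-color sets are exactly translates of $\Lambda(u_1,u_2)$, which yields \textbf{P3} and hence the other principles. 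You instead prove (a) exact invariance of $c_3$ under translation by $u_1$ and $u_2$, where only integer shifts of $\alpha',\beta'$ ever occur so the floor/modulo interaction is trivial, and (b) injectivity of $c_3$ on the fundamental domain via the sub-tile indexing $(x',y')$ refined by $c'$ and Lemma~\ref{lem:c-prime-is-a-direct-coloring}; together these give the full fiber characterization. Your route buys a cleaner treatment of the floor functions and makes the count of $d=d'g_1g_2$ colors (\textbf{P1}, \textbf{P2}) explicit, at the price of having to justify the tiling/injectivity argument that the paper sidesteps. One small imprecision: you label the invariance step (a) as ``precisely principle \textbf{P3},'' but invariance alone only shows that the lattice is \emph{contained in} a color class; the reverse inclusion demanded by \textbf{P3} (that the color class is \emph{exactly} the lattice) requires combining (a) with the injectivity of (b), which you do use later for \textbf{P4} --- so the content is all there, only the attribution should be adjusted.
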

\proof

Almost all the properties can be derived from identifying the set of nodes
with identical colors.

Let $w_1$ and $w_2$ be two nodes with the same color, hence, 
with $c(w_1)=c(w_2)$.
Because $c = b_3 \circ c_3$ and $b_3$ is a bijection, this implies
that 
\begin{equation}
\label{eq:w1w2equal}
c'(w_1) = c'(w_2) \mathrm{~and~} x'(w_1) = x'(w_2) \mathrm{~and~} y'(w_1) = y'(w_2)
\end{equation}

The part $c'(w_1) = c'(w_2)$ implies that $w_1$ and $w_2$ have the same color
in the method VCM-NCC~1 applied with vectors $v_1,v_2$, therefore, by
properties of this method:
$$w_1-w_2 = \lambda v_1 + \mu v_2 \mathrm{~for~some~}(\lambda,\mu) \in \ZZ^2$$

Let us decompose $\lambda$ as quotient and remainder modulo $g_1$
(resp. $\mu$, modulo $g_2$):
$\lambda = g_1q_1 + r_1$ and $\mu = g_2q_2 + r_2$ where
$r_1 \in \{ 0, 1, \ldots g_1 -1\}$, $r_2\in\{0, 1, \ldots g_2 -1\}$
and $(q_1,q_2)\in\ZZ^2$.

The previous equality becomes:
\begin{equation}
\label{eq:diffw1w2}
w_1-w_2 = (g_1 q_1 + r_1) v_1 + (g_2 q_2 + r_2) v_2
\end{equation}

Now developing the part $x'(w_1) = x'(w_2)$ from (\ref{eq:w1w2equal}),
we get:
$$\lfloor \alpha'(w_1)\rfloor \mathrm{~modulo~} g_1 = \lfloor \alpha'(w_2)\rfloor \mathrm{~modulo~} g_1$$
$$\implies \lfloor \alpha'(w_1) - \alpha'(w_2)\rfloor \mathrm{~modulo~} g_1 = 0
\mathrm{~or~}
\lfloor \alpha'(w_2) - \alpha'(w_1)\rfloor \mathrm{~modulo~} g_1 = 0$$
Which equality is true depends on which one of $\alpha'(w_1)$ and
$\alpha'(w_2)$ has the largest fractional part. Assume it is $\alpha'(w_1)$,
then we have:
\begin{eqnarray*}
& \lfloor \alpha'(w_1) - \alpha'(w_2)\rfloor \mathrm{~modulo~} g_1 = 0 \\
\implies & \lfloor \frac{det(w_1-w_2,v_2)}{d'}\rfloor \mathrm{~modulo~} g_1 = 0~\mathrm{~(by~definition~of~\alpha')} \\
\implies & \lfloor (g_1q_1 + r_1) \frac{det(v_1,v_2)}{d'}\rfloor \mathrm{~modulo~} g_1 = 0~\mathrm{~(using~eq~\ref{eq:diffw1w2})} \\
\implies & r_1 = 0~\mathrm{~(using~the~definition~of~d'})
\end{eqnarray*}

In the same way, developing $y'(w_1)=y'(w_2)$ from (\ref{eq:w1w2equal}), 
yields $r_2 = 0$. Since $u_1=g_1v_1, u_2=g_2v_2$, we can now rewrite 
Eq~(\ref{eq:diffw1w2}) as:
$$w_1-w_2 = q_1 u_1 + q_2 u_2~\mathrm{~for~some~}(q_1,q_2)\in\ZZ^2$$

The opposite is true: if $w_1-w_2$ verify this equality,
$w_1 = w_2 + q_1 u_1 + q_2 u_2$, and then one can easily check that 
$c'(w_1) = c'(w_2)$ and similarly:
\begin{eqnarray*}
x'(w_1)&=&\lfloor \alpha'(w_2 + q_1 u_1 + q_2 u_2) \rfloor 
            \mathrm{~modulo~} g_1 \\
    &=&\lfloor \frac{det(w_2 + q_1 u_1 + q_2 u_2, v_2)}{det(v_1,v_2)} \rfloor 
             \mathrm{~modulo~} g_1 \\
    &=&\lfloor \alpha'(w_2) + q_1 g_1 \rfloor 
             \mathrm{~modulo~} g_1 \\
    &=&\lfloor \alpha'(w_2)  \rfloor 
             \mathrm{~modulo~} g_1 = x'(w_2)
\end{eqnarray*}
Likewise $y'(w_1) = y'(w_2)$ and thus finally $c(w_1)=c(w_2)$, that is, the
two nodes of coordinates $w_1$ and $w_2$ have the same color.

As a result, we have established that the nodes with identical colors are 
exactly the points located on a lattice generated by vectors $v_1,v_2$. 
This is exactly the principle {\bf{}P3}, and actually implies the principles
{\bf{}P1}, {\bf{}P2} and {\bf{}P4}.

\endproof

%In addition to the
%notations and definitions used in this section, we consider the
%following ones:\\
%
%% \begin{itemize}
%% \item Let $E_1= \{(x',y')\ \forall\ W(x,y) \in \ZZ^2\}.$
%% \item Let $E_2= \{c'\ \forall\ W(x,y) \in \ZZ^2\}.$
%% \end{itemize}

%% We need to prove that there is a bijection from $E=\{(x',y',c')\  \forall\ W \mathrm{~of~coordinates~} (x,y) \in \ZZ^2\}$ to $\{0,1,..d-1\}$. To prove this, we have the following steps:\\
%% \begin{enumerate}
%% \item \begin{lemma}
%% There is a bijection between $E_1$ and $\{0,1, \ldots |g_1g_2-1|\}.$\end{lemma}
%% \proof
%% The parallelotope $\mathcal{P}(u_1,u_2)$ is constitued of $g_1g_2$ parallelotope $\mathcal{P}(v_1,v_2)$.  Notice that the parallelotope $\mathcal{P}(v_1,v_2)$ are situated at the nodes of coordinates $(x',y')$. We say that $(x',y')$ are the coordinates of the parallelotope $\mathcal{P}_A(v_1,v_2)$  where A is a node of coordinates $(x',y')$ in $\mathcal{P}(u_1,u_2)$ relative to the basis $(u_1,u_2)$.  
%% As there are $g_1g_2$ parallelotope $\mathcal{P}(v_1,v_2)$ in $\mathcal{P}(u_1,u_2)$, we conclude that $x'$ (respectively $y'$) takes all the different integer values in ${0,1,2, \ldots g_1-1 \mathrm{~(respectively~} g_2-1)}$.
%% Hence the Lemma.
%% \endproof
 
%% \item 
%% \item \begin{lemma}
%% There is a bijection between $E$ and $\{0,1, \ldots |d|-1\}.$
%% \end{lemma}
%% \proof
%% ..........
%% \endproof
%% \end{enumerate}

%% \hrule

\begin{remark}
In this section, we selected $c'(w)$ as $c'(w) = c_1(w,v_1,v_2)$. Alternatively,
one could select $c'(w)=c_2(w,v_1,v_2)$. Notice also that once the choice
is made, lemma~\ref{lem:c-prime-is-a-direct-coloring} does not require
that coordinates are coprimes for both vectors $v_1$ and $v_2$. Indeed, for the choice $c' = c_1$, (respectively $c' = c_2$) only $v_2$ (respectively $v_1$) is required to have coordinates that are coprime.
\end{remark}

\subsection{Computing the Number of Colors}
The number of colors used in a periodic $h$-hop coloring is given by the next Property.
\label{sec:color-compute}
\begin{property}
\label{numberofcolors}
For any node $U$, the color pattern defined by the two generator vectors $u_1$ and $u_2$ meeting the aforementioned principles contains exactly
$| x_1y_2-x_2y_1 |$ colors where $(x_1,y_1)$ and $(x_2,y_2)$ are the
coordinates of $u_1$ and $u_2$.
\end{property}
\proof By definition, no two nodes within the parallelogram defined by $u_1$ and $u_2$ use the same color. Hence the number of colors is equal to the number of nodes in this parallelogram. Moreover, as we said, the number of nodes in $\mathcal{P}(u_1,u_2)$, called \textit{lattice determinant}, is equal to the absolute value of det$(u_1,u_2)$~\cite{latticeDef}. Hence the property.
\endproof

\subsection{Example of Color Calculation}
\label{sec:ex-ncc1-ncc2}
In this section, we illustrate the color calculation, using the example
of Figure~\ref{fig:NCC2}.

In Figure~\ref{fig:NCC2}, we have the following coordinates for $u_1,u_2,w$:
\begin{itemize}
\item $u_1 = (6,2)$ and $u_2 = (-3,6)$
\item $w = (8,9)$
\end{itemize}

Applying VCM-NCC~1 with the generator vectors $u_1,u_2$, we get:
\begin{itemize}
\item Number of colors $= 42$
\item $(c_1(w,u_1,u_2), c_2(w,u_1,u_2)) = (33, 38)$
\item Using the example bijection of \refsec{sec:example-bijection},
  $c_1(w),c_2(w)$ is the $36^{th}$ value in the sorted list of possible values,
  hence color$(w) = 35$ 
\end{itemize}

Applying VCM-NCC~2 with the generator vectors $u_1,u_2$, we get:
\begin{itemize}
\item Number of colors $= 42$
\item $g_1 = gcd(\mathrm{coords~of~} u_1) = 2$ and then $v_1 = (3,1)$
\item $g_2 = gcd(\mathrm{coords~of~} u_2) = 3$ and then $v_2 = (-1,2)$
\item $c'(w) = 4$, $x'(w)=1$, $y'(w)=2$
\item Color of $w$: $c(w) = 39$
\end{itemize}

For reference, the colors are computed internally from the coordinates $w$ 
on different basis:
%$w = (1+\frac{11}{14}) u_1 + \frac{19}{21} u_2$
$w = (1+\frac{33}{42}) u_1 + \frac{38}{42} u_2$
and $w = (3+\frac{4}{7}) v_1 + (2+\frac{5}{7}) v_2$

\section{VCM: Validity Check (VC)}
%------------------------------------
\label{sec:vcm-validity}\label{sec:vcm-vc}
As defined previously, a $h$-hop coloring algorithm is valid if and only if no two nodes that are at less or equal to $h$-hop from each other use the same color. 

%As said in Lemma~\ref{lem:colorRep}, given two generator vectors $u_1, u_2$, in {\em VCM}, the color of the origin node is repeated at the nodes of the lattice $\Lambda(u_1,u_2)$. For this reason, we can restrict the study of the validity of our coloring algorithm to the grid nodes $W$ with coordinates $\alpha$, $\beta$ in $\Lambda(u_1,u_2)$.
%Our next objective is to determine sufficient conditions to conclude that such a node $W$ is guaranteed to be at least $(h+1)$-hop away from a reference node of the lattice.

The node color computation algorithm of VCM 
(described in Section~\ref{sec:vcm-color-computation}) takes as input two generator
vectors $u_1$, and $u_2$, and gives the color of each node. 
In this section, we will assume that such
two vectors are given and fixed, and we present two methods for checking
beforehand whether the coloring induced by these vectors is a valid coloring.

%In \RCA{}{the rest of} this section, we assume that the two generators vectors $u_1$
%and $u_2$ (input parameters ) are given and fixed. 
%We present two methods for checking that they provide a valid coloring. 

\subsection{Method VC1: Verification around Origin}
\label{sec:vcm-vc-m1}

\mybox{
\begin{method}[VC1]
For each node $W$ in the $h$-hop neighborhood of the origin node $O$, we compute the color of this node based on the given generator vectors $u_1$ and $u_2$. If $W$ has the same color as $O$, then we conclude that the vectors $u_1$ and $u_2$ do not provide a valid coloring. Otherwise, if the color of $O$, is not repeated at any point $W$ in its $h$-hop neighborhood, then the coloring is valid.
\end{method}
}
The idea of Method~\ref{lem:origin-check-equiv} is based on the following fact, proven in this section: if there is a color conflict between any two nodes $V_1$ and $V_2$ in $\Lambda(u_1,u_2)$ ($V_1$ and $V_2$ have the same color despite they are at less than or equal to $h$ hops), there will be a color conflict in the $h$-hop neighborhood of the origin $O$. \\
We set $d = \det(u_1,u_2)$.

%The following lemma shows that validity can be restricted to a check on the lattice $\Lambda(u_1,u_2)$.
\begin{lemma} \label{lem:origin-check-equiv}
If two nodes $V_1$ and $V_2$ with coordinates $v_1$, $v_2$ in $\ZZ^2$ have the same color, then the color of the origin node is repeated at the node $W$ of coordinates $v_2 - v_1$.
\end{lemma}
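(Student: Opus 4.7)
The plan is to leverage the translation invariance of the VCM coloring, which follows directly from the lattice structure established earlier. More precisely, Lemma~\ref{lem:colorRep} (principle {\bf P3}) characterizes the set of nodes sharing a given color as a translate of the lattice $\Lambda(u_1,u_2)$. The lemma to be proved is essentially a restatement of this fact applied to differences of coordinate vectors, together with the observation that the origin is a natural representative of the trivial coset.

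First, I would apply Lemma~\ref{lem:colorRep} to node $V_1$: since $V_2$ has the same color as $V_1$, the coordinates must satisfy $v_2 = v_1 + \alpha u_1 + \beta u_2$ for some integers $(\alpha,\beta) \in \ZZ^2$. Rearranging, $v_2 - v_1 = \alpha u_1 + \beta u_2$, which means the vector of coordinates $v_2 - v_1$ belongs to the lattice $\Lambda(u_1,u_2)$.

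Second, I would apply Lemma~\ref{lem:colorRep} in the other direction, this time with $U$ being the origin $O$ (of coordinates $(0,0)$): the color of $O$ is repeated at every node $W$ whose coordinates are of the form $\alpha' u_1 + \beta' u_2$ for integers $(\alpha',\beta')$. Taking $(\alpha',\beta') = (\alpha,\beta)$ from the previous step, we conclude that the node $W$ of coordinates $v_2 - v_1$ has the same color as $O$, which is the desired statement. (The same argument works identically for the VCM-NCC~2 coloring, since the preceding lemma established that the nodes sharing a color under NCC~2 likewise form a coset of $\Lambda(u_1,u_2)$.)

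There is no real obstacle here: the proof is essentially one line once the right characterization (P3) is invoked twice. The only care needed is to state clearly that ``having the same color'' is equivalent to ``differing by a lattice vector'', so that the property can be read both as a necessary condition (applied to $V_1,V_2$) and as a sufficient condition (applied to $O$ and $W$).
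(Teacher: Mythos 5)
Your proof is correct, but it takes a genuinely different route from the paper's. The paper's proof is a direct computation: it observes that the functions $c_1$ and $c_2$ of System~(\ref{colorcompute}) are linear modulo $d$, so that $c_i(v_2-v_1) = c_i(v_2) - c_i(v_1) \mathrm{~modulo~} d = 0 = c_i(O)$ for $i=1,2$, and the conclusion follows immediately from the definition of the coloring. You instead invoke the lattice-coset characterization of color classes twice: once in the ``necessary'' direction (same color $\Rightarrow$ coordinates differ by a vector of $\Lambda(u_1,u_2)$) and once in the ``sufficient'' direction (lattice points share the color of the origin). This works, but note that Lemma~\ref{lem:colorRep} as literally stated only gives the sufficient direction; the necessary direction you need is supplied by principle {\bf P3} as established in the proof of Property~\ref{prop:ncc1} (the set of nodes with a given color is \emph{exactly} a translate of the lattice), and you correctly flag that this biconditional is the one point requiring care. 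The trade-off: your argument is more structural and transfers verbatim to any coloring whose color classes are lattice cosets (e.g., VCM-NCC~2), whereas the paper's argument is more elementary and self-contained, needing only the linearity of the color functions rather than the full strength of the coset characterization. Either proof is acceptable here.
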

\proof
\begin{LongVer}
The functions $c_1,c_2$ computed from System~\refeq{colorcompute} are actually linear modulo $d$. That is, if $W$ is the node with coordinates $v_1-v_2$, and $w$ is the vector of nodes extremities the origin and $W$, we get:
$c_1(W) = c_1(V_1)-c_1(V_2) \mathrm{~modulo~} d$. Hence, if $c_1(V_1) = c_1(V_2)$ we have $c_1(W) = c_1(O)$. This is true also
for $c_2$, hence the lemma.
\end{LongVer}
\begin{ShortVer}
The property comes from the fact that $c_1$ and $c_2$ are linear (see \cite{RR-VCM} for details).
\end{ShortVer}
\endproof

%The lemma~\ref{lem:origin-check-equiv} yields the following method for verifying validity:
\begin{theorem}
\label{method:disk-zero-check}
If none of the nodes inside the $h$-hop neighborhood of the origin node 
$O=(0,0)$
has the same color as $O$ itself, then the coloring is valid.
\end{theorem}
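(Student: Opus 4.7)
My plan is to prove the theorem by contrapositive, using Lemma~\ref{lem:origin-check-equiv} together with the translation invariance of the $h$-hop neighborhood relation on the infinite grid $\ZZ^2$.

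First, I would assume that the coloring is invalid. By definition of validity, this means there exist two distinct nodes $V_1$ and $V_2$ with coordinates $v_1,v_2 \in \ZZ^2$ that share the same color and such that $V_2$ lies within the $h$-hop neighborhood of $V_1$. I then apply Lemma~\ref{lem:origin-check-equiv} to conclude that the node $W$ of coordinates $w = v_2 - v_1$ carries the same color as the origin $O$, and $W \neq O$ since $V_1 \neq V_2$.

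Next, I would argue that $W$ itself lies in the $h$-hop neighborhood of $O$. This is the one step that needs a short justification: since the neighborhood relation on $\ZZ^2$ is defined purely by Euclidean distance under the unit disk model, it is invariant by translation, i.e. $V_2 \in \Neigh{V_1}$ iff $V_2 - V_1 \in \Neigh{O}$. By iterating this $h$ times, being within $h$ hops of $V_1$ is equivalent to being within $h$ hops of $O$ after translation by $-v_1$. Applied to $V_2$, this gives that $W$, which has coordinates $v_2 - v_1$, belongs to the $h$-hop neighborhood of $O$.

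Combining the two points, $W$ is a node in the $h$-hop neighborhood of $O$ which has the same color as $O$, contradicting the hypothesis of the theorem. I expect the main (minor) obstacle to be the clean statement of translation invariance, rather than any genuinely hard computation; everything else follows directly from Lemma~\ref{lem:origin-check-equiv}.
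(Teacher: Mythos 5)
Your proof is correct and follows essentially the same route as the paper's: a proof by contradiction that invokes Lemma~\ref{lem:origin-check-equiv} to transport a color conflict between $V_1$ and $V_2$ to a conflict between $O$ and $W = V_2 - V_1$, then uses translation invariance of the hop-distance on $\ZZ^2$ to place $W$ in the $h$-hop neighborhood of $O$. The paper states the translation-invariance step more tersely ("the distance in terms of hop number between $O$ and $W$ is the same as the distance between $V_1$ and $V_2$"), whereas you spell it out explicitly, which is a welcome clarification rather than a difference in approach.
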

\proof
By contradiction: assume that the coloring is invalid, which implies that
two nodes $V_1,V_2$ at less or equal to $h$ hops have the same color. Then from
Lemma~\ref{lem:origin-check-equiv}, the node $W$ such as $\vec{OW} = \vec{V_1V_2}$ has the same color as $O$. Notice that the distance in terms of hop number between $O$ and $W$ is the same as the distance between $V_1$ and $V_2$. Hence we have found a color conflict between $O$ and a node $W$ which is at less than $h$ hops from $O$. Hence the theorem.
\endproof

Theorem~\ref{method:disk-zero-check} proves that Method VC1 is a correct method
for checking whether two generator vectors yield a valid $h$-hop coloring.

\subsection{Method VC2: Verification in a Few Points}
\label{sec:vcm-vc-m2}

Method VC\ref{method:disk-zero-check} requires $\Theta(R^2)$ 
verifications when $R \rightarrow \infty$. In the following, we propose Method VC\ref{method:small-grid-check}, usable when $R>\sqrt{2}$ and requiring only a bounded number of verifications. Method VC\ref{method:small-grid-check} performs a check on a few nodes on the lattice $\Lambda(u_1,u_2)$ to guarantee that the $h$-hop coloring associated to $u_1, u_2$ is valid. This method is based on Gauss lattice reduction~\cite{VV07} (see the Annex for more details): $u_1$ and $u_2$ should be first reduced,
and hence verify the Equations~(\ref{GaussPr}). \\
\mybox{
\begin{method}[VC2]
The nodes with the same color as the origin are on the lattice 
$\Lambda(u_1,u_2)$:
%For the grid nodes that will have the same color as the origin (which are in fact, the lattice nodes), 
this method verifies that these nodes are at least $(h+1)$-away from $O$, in which case the coloring is valid. 
However, not all grid nodes need to be checked. It is sufficient to check only nodes $W$ in $\Lambda(u_1,u_2)$ with coordinates $\alpha,\beta$ on the basis $\{ u_1, u_2 \}$, such that $|\alpha|$ and $|\beta| < \mu(R)$, with $\mu(R) = \frac{2 \sqrt{3}R}{3(R-\sqrt{2})}$. 
The coloring is valid if and only if these nodes are strictly more than $h$ hops from the origin node.
\end{method}
}
This method is based on the following theorem.

\begin{theorem}
\label{method:small-grid-check}
For $R>\sqrt{2}$, the coloring provided by two reduced vectors $u_1, u_2$ is valid if and only if: \newline 
for all $\alpha,\beta$ integers verifying $|\alpha| < \mu(R)$, 
and $|\beta| < \mu(R)$, 
the node with coordinates $(\alpha,\beta)$ on the basis $\{ u_1, u_2 \}$ 
is at strictly more
than $h$ hops from the origin node $O$, where $\mu(R) = \frac{2\sqrt{3}R}{3(R-\sqrt{2})}$.
\end{theorem}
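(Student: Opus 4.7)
\medskip
\noindent\textbf{Plan of proof.}
The forward direction is immediate: if the coloring is valid, then by Theorem~\ref{method:disk-zero-check} no non-origin point of $\Lambda(u_1,u_2)$ lies within $h$ hops of $O$, and in particular the finitely many points in the check box, those with $|\alpha|,|\beta|<\mu(R)$, are all at strictly more than $h$ hops. The substance of the proof is the converse.

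Assume the finite check succeeds. I want to show every non-origin lattice point is at more than $h$ hops from $O$. For points with $|\alpha|,|\beta|<\mu(R)$ this is the hypothesis, so it remains to treat a point $V=\alpha u_1+\beta u_2$ with $\max(|\alpha|,|\beta|)\ge\mu(R)$. My plan is to prove the purely Euclidean estimate $|V|>hR$, which automatically forces the hop distance to exceed $h$ because each hop has Euclidean length at most $R$, hence $h$ hops span a distance of at most $hR$.

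The estimate rests on two auxiliary facts. First, a reduced-basis bound: the Gauss-reduction inequalities $|u_1|\le|u_2|$ and $2|u_1\cdot u_2|\le|u_1|^{2}$ expand $|\alpha u_1+\beta u_2|^{2}$ to at least $(\alpha^{2}+\beta^{2}-|\alpha\beta|)\,|u_1|^{2}$, and the quadratic $\alpha^{2}+\beta^{2}-|\alpha\beta|$ attains a minimum ratio of $3/4$ relative to $\max(|\alpha|,|\beta|)^{2}$ (reached when $|\beta|/|\alpha|=1/2$), giving
\[
|\alpha u_1 + \beta u_2|\;\ge\;\tfrac{\sqrt{3}}{2}\,\max(|\alpha|,|\beta|)\,|u_1|.
\]
Second, a grid-reachability lemma: for $R>\sqrt{2}$, any target $W\in\ZZ^2$ can be reached from $O$ in at most $\lceil |W|/(R-\sqrt{2})\rceil$ hops. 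This I will prove by a greedy construction; at the current position $U$, aim along the unit vector $\vec{d}$ toward $W$ at the continuous point $U+(R-\sqrt{2}/2)\vec{d}$, round to the nearest grid node $U'$ (which is within $\sqrt{2}/2$ of it), and use the triangle inequality to check that $|U'-U|\le R$ and that the residual distance to $W$ drops by at least $R-\sqrt{2}$. Contrapositively, any node at more than $h$ hops from $O$ has Euclidean norm greater than $h(R-\sqrt{2})$.

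To close the argument, note that $1<\mu(R)$ whenever $R>\sqrt{2}$, so $(1,0)$ lies in the check box and the hypothesis guarantees $u_1$ is at more than $h$ hops from $O$; the reachability lemma then yields $|u_1|>h(R-\sqrt{2})$. Combining with the reduced-basis bound and the explicit form of $\mu(R)$,
\[
|V|\;\ge\;\tfrac{\sqrt{3}}{2}\,\mu(R)\,|u_1|\;>\;\tfrac{\sqrt{3}}{2}\cdot\frac{2\sqrt{3}R}{3(R-\sqrt{2})}\cdot h(R-\sqrt{2})\;=\;hR,
\]
as required. The principal obstacle is the grid-reachability lemma: it is the one place where the constant $\sqrt{2}$ and the hypothesis $R>\sqrt{2}$ enter, and verifying that the greedy hop is both admissible (length at most $R$) and genuinely progressive requires a careful geometric argument, together with a separate treatment of the final hop when the residual distance is already below $R$.
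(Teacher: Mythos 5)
Your proposal is correct and follows essentially the same route as the paper: the paper's proof also reduces to showing that any lattice point with $|\alpha|\ge\mu(R)$ or $|\beta|\ge\mu(R)$ satisfies $|\alpha u_1+\beta u_2|^2\ge\frac34\max(\alpha^2,\beta^2)|u_1|^2>h^2R^2$ (its Lemma~\ref{fromGauss} and Lemma~\ref{lem:mu-bound}), using the check at $(1,0)$ to get $|u_1|>h(R-\sqrt2)$ via Lemma~\ref{hhopprop} (your ``grid-reachability lemma'', which the paper proves by subdividing the segment and rounding to nearest grid points rather than by a greedy construction) and concluding with Lemma~\ref{morethanHhop}. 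The only nit is that the forward direction follows from the definition of validity together with Lemma~\ref{lem:colorRep} (lattice points share the origin's color), not from Theorem~\ref{method:disk-zero-check}, which is the converse implication.
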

\proof 
\begin{LongVer}
The property comes from the fact that the points on the lattice $\Lambda(u_1,u_2)$
are ``far'' from the origin node, because the vectors $u_1,u_2$ are reduced.\\
Indeed, Lemma~\ref{lem:mu-bound} (see the Annex) means that any node on the lattice with coordinates $\alpha u_1 + \beta u_2$, with $|\alpha|$ or $|\beta|$ $\ge \mu(R)$ can reuse the color of the origin node $O$ because they are at strictly more that $h$-hop from $O$ (provided that the points of coordinates $u_1$ or $u_2$ are themselves strictly more than $h$-hop away from $O$). Hence, to check the validity of the coloring provided by {\em VCM}, it is necessary and sufficient to check that for all $|\alpha|$ and $|\beta| < \mu(R)$, nodes of coordinates $\alpha$, $\beta$ in the lattice $\Lambda(u_1,u_2)$ are strictly more than $h$ hops away from the origin of the lattice.
This check includes checking the validity of $u_1$ and $u_2$ themselves 
(cases $(\alpha,\beta) = (1,0)$ and $(\alpha,\beta) = (0,1)$)

~\newline
Notice that for dense grids ($R\rightarrow \infty$), 
$\mu \rightarrow 1.15\ldots $. This small bound reduces the set of nodes to be checked in order to verify the validity of the coloring for given vector candidates.

In fact for $R > \frac{3 \sqrt{2}}{3-\sqrt{3}}$,
that is for $R > 3.3461$, we have $\mu < 2$, hence only $4$ points need to
be checked (considering symmetries):
$u_1$ (with $\alpha = 1, \beta=0$), $u_2$ (with $\alpha = 0, \beta = 1$),
$u_1+u_2$ (with $\alpha = 1, \beta = 1$),
$u_1-u_2$ (with $\alpha = 1, \beta = -1$).

However, Method~\ref{method:disk-zero-check} is applicable for any radio range $R$, whereas Method\ref{method:small-grid-check} requires $(R>\sqrt{2}).$

%This set includes the lattice points at $(UV_1+U_V2), (UV_1-UV_2),()$.
\end{LongVer}

\begin{ShortVer}
The property comes from the fact the points on the lattice $\Lambda(u_1,u_2)$
are ``far'' from the origin node, because the vector $u_1,u_2$ are reduced.
The details of the proof are available in \cite{RR-VCM}.
\end{ShortVer}

\section{VCM: Optimal Vector Search (OVS)}
\label{sec:vcm-ovs}
%------------------------
To achieve an optimal spatial reuse, the coloring algorithm should minimize the number of colors used.
For \textit{VCM}, our aim is to judiciously choose the generator vectors $u_1$ and $u_2$ in order to reduce the number of colors used to color a grid. However, by default, the infinite lattice $\ZZ^2$ is a possible set for candidate vectors. So, to find the optimal vectors in  a small set, our approach is as the following:
\begin{itemize}
\item We determine the upper and lower bounds on the number of colors needed in a $h$-hop coloring of the grid, for $h\ge1$. 

\item Because for any couple of initial generator
vectors, reduced vectors always exist, it is sufficient to search 
for some optimal vectors in the space of reduced vectors defined
by System~(\ref{GaussPr}).
\end{itemize}

We will show in Section~\ref{sec:complexity} how to bound the set of candidate vectors using the properties of {\em lattice reduction} and the  upper and lower bounds on the number of colors to decrease
the complexity of the search for the optimal vectors.
\newline

\subsection{Bounds on the Number of Colors in Colorings}\label{boundsSec}
%==============================================================
In this section, we prove that the number of colors of optimal colorings when $R \rightarrow \infty$ is shown to be asymptotically
$\frac{\sqrt{3}}{2}h^2R^2 + O(R)$, from the combination of two bounds.

\subsubsection{Lower Bound}
%--------------------------
For the lower bound, we have the following theorem that is valid for any coloring, not just periodic colorings. It uses known results on circle packings.

\begin{theorem}\label{lowerBoundTheorem}
The number of colors required to color an infinite grid with $R>\sqrt{2}$ is at least $\frac{\sqrt{3}}{2}h^2(R-\sqrt{2})^2$.
\end{theorem}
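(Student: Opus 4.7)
The plan is to combine two ingredients: a geometric fact relating hop distance to Euclidean distance on the grid (which is where the hypothesis $R > \sqrt{2}$ enters), and the classical circle-packing density bound of Thue.

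First, I would establish the following auxiliary lemma: for any two grid nodes $U, V$ with $d(U,V) \le h(R - \sqrt{2})$, there is a path of at most $h$ hops between them. To show this, subdivide the segment $UV$ into $h$ equal parts by intermediate points $P_1, \dots, P_{h-1}$, each consecutive pair at distance $d(U,V)/h \le R - \sqrt{2}$. For each $P_i$, pick a grid node $Q_i$ within $\sqrt{2}/2$ of $P_i$ (always possible since any point of $\mathbb{R}^2$ is within $\sqrt{2}/2$ of $\mathbb{Z}^2$). Setting $Q_0 = U$, $Q_h = V$, the triangle inequality yields $d(Q_i, Q_{i+1}) \le \sqrt{2}/2 + (R - \sqrt{2}) + \sqrt{2}/2 = R$ for the inner legs, and $\le R - \sqrt{2}/2 \le R$ for the two end legs. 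Hence $U, Q_1, \dots, Q_{h-1}, V$ is an $h$-hop path.

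By contraposition, any two distinct nodes sharing a color in a valid $h$-hop coloring must satisfy $d(U,V) > h(R - \sqrt{2})$. Next, I would invoke the classical circle packing bound (Thue, completed by Fejes T\'oth): the densest packing of congruent open disks in the plane has density $\pi/(2\sqrt{3})$, attained by the triangular lattice. Equivalently, any point set in $\mathbb{R}^2$ with pairwise distance $\ge D$ has upper density at most $\frac{2}{\sqrt{3}\,D^2}$, since each point can be surrounded by a disk of radius $D/2$ and these disks are pairwise disjoint.

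Finally, I would combine the two: apply the packing bound to each of the $k$ color classes with $D = h(R - \sqrt{2})$, getting that each class has density at most $\frac{2}{\sqrt{3}\,h^2 (R-\sqrt{2})^2}$. Since the $k$ classes partition $\mathbb{Z}^2$ which has density $1$, summing densities gives
\[
1 \;\le\; k \cdot \frac{2}{\sqrt{3}\,h^2 (R-\sqrt{2})^2},
\]
hence $k \ge \frac{\sqrt{3}}{2}\,h^2(R-\sqrt{2})^2$, the desired bound. The main technical hurdle is making the density comparison rigorous on the infinite grid: formally, densities should be computed as limits of $|S \cap B_N|/|B_N|$ over a growing family of balls or squares, and one must check that the boundary contribution vanishes in the limit (each color class's count in $B_N$ is at most $|B_{N + h(R-\sqrt{2})/2}|$ divided by the area per packed disk, which differs from the naive bound by an $O(N)$ boundary term, negligible compared to the $\Theta(N^2)$ leading term). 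Everything else is essentially routine geometry plus the invocation of a classical theorem.
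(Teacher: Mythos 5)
Your proposal is correct and follows essentially the same route as the paper: the same hop-versus-distance lemma (subdividing the segment and rounding to nearby grid points, which is the paper's Lemma~\ref{hhopprop}), followed by the same Thue--T\'oth circle-packing density argument applied to each color class. Your added remark on making the density limit rigorous over growing balls is a welcome refinement of a step the paper treats informally.
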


\begin{proof}
Consider $h$-hop coloring of the grid $\ZZ^2$. 
Consider a fixed color $c$, and now let $S_c$ be the set of nodes having
this color. 

We first establish a lower bound of the distance of nodes
in $S_c$. 
Let us define $\rho = (R-\sqrt{2})h$.
Consider two nodes $A,B$ of $S_c$. By contradiction: if their distance
verifies $d(A,B) \leq \rho$, from Lemma~\ref{hhopprop},
they would be at most $h$-hop away, contradicting the definition of a 
$h$-hop coloring. 
Therefore, all nodes of $S_c$ are at a distance at least $\rho$ from
each other. 

Now consider the set of circles ${\cal C}$ of 
radius $\frac{1}{2}\rho$
and whose centers are the nodes of $S_c$. The fact that any two nodes 
of $S_c$ are distant of more $\rho$, implies that none of the circles 
in ${\cal C}$ overlap. Hence ${\cal C}$ is a \emph{circle packing} 
by definition. From the Thue-T\'oth theorem 
\cite{t1910,t1943} establishing that the hexagonal circle packing 
is the densest packing, with a density of $\frac{\pi}{\sqrt{12}}$, 
we deduce that ${\cal C}$
must have a lower or equal packing density. 
This implies an upper bound of the density of set $S_c$ of centers of the disks of $\frac{1}{(\rho/2)^2\sqrt{12}}$.

Because each color yields a set of nodes with at most this density, it
follows a lower bound of the number of colors that is the inverse of this
quantity, hence the theorem.
\end{proof}

\subsubsection{Upper Bound}
%---------------------------
%\begin{property}
%\label{1hopprop}
%For any transmission range $R \geq 1$, for any grid node $U$, any node $V$ such that its euclidian distance $d\vec{UV}$ meets $0<d\vec{UV}\leq R$ belongs to 1-hop$(U)$.
%\end{property}
%\proof By definition of the radio range and assuming an ideal environment with symmetric links.
%\endproof
For an upper bound, when $R > \sqrt{2}$ , we construct explicitly a periodic 
coloring; more 
precisely we construct two vectors $v_1, v_2$ yielding a valid coloring
with VCM. As a result, optimal colorings and optimal periodic colorings
must have a number of colors which is lower or equal.

Because the vectors constructed yield lattices which are close to hexagonal
lattices (when $R \rightarrow \infty$), and because it happens that hexagonal
lattices yields the densest packing (as used in the proof of
theorem~\ref{lowerBoundTheorem}),
the upper bound will be somewhat ``close'' to the previous lower bound.

We have the following theorem:
\begin{theorem}\label{theo:upper-bound}
The number of colors required to color an infinite grid is at most
$\frac{\sqrt{3}}{2}h^2R^2 + 2 hR + (hR+2) \sqrt{2}$.
\end{theorem}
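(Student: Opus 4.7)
The plan is constructive: I would exhibit explicit integer generator vectors $v_1, v_2 \in \ZZ^2$ that yield a valid $h$-hop VCM coloring, then invoke Property~\ref{numberofcolors} to identify the number of colors with $|\det(v_1,v_2)|$, and bound the latter. The natural target is a hexagonal lattice, since it realizes the extremal density used in the proof of Theorem~\ref{lowerBoundTheorem}, so the upper bound should be ``close to'' the lower bound up to lower-order terms in $R$.

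Concretely, I would fix a real parameter $L$ slightly larger than $hR + \sqrt{2}$ and consider the ideal hexagonal basis $e_1 = (L, 0)$ and $e_2 = (L/2,\, L\sqrt{3}/2) \in \RR^2$. Then I would define $v_1, v_2 \in \ZZ^2$ by coordinate-wise nearest-integer rounding, so that the perturbations $\delta_i = v_i - e_i$ satisfy $\|\delta_i\| \leq \sqrt{2}/2$. The extra margin $\sqrt{2}$ in $L$ is chosen precisely to absorb this rounding error.

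For validity, the key observation is that $(v_1, v_2)$ is, up to sign and swap, a Gauss-reduced basis of $\Lambda(v_1, v_2)$, so the shortest nonzero lattice vectors lie among $\pm v_1, \pm v_2, \pm(v_1 - v_2)$. By the triangle inequality, $|v_1|, |v_2| \geq L - \sqrt{2}/2$ and $|v_1 - v_2| \geq |e_1 - e_2| - \sqrt{2} = L - \sqrt{2}$, each strictly larger than $hR$. Since any two grid nodes at Euclidean distance $> hR$ are at strictly more than $h$ hops apart (each hop covers distance at most $R$), Theorem~\ref{method:small-grid-check} (or, equivalently, checking the finite neighborhood from Theorem~\ref{method:disk-zero-check}) guarantees that the induced periodic coloring is a valid $h$-hop coloring.

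For the determinant, I would expand by multilinearity,
\[
\det(v_1, v_2) = \det(e_1, e_2) + \det(\delta_1, e_2) + \det(e_1, \delta_2) + \det(\delta_1, \delta_2),
\]
apply $|\det(a,b)| \leq |a|\,|b|$ to the three perturbation terms, and use $|\det(e_1, e_2)| = \tfrac{\sqrt{3}}{2}L^2$, obtaining
\[
|\det(v_1, v_2)| \leq \tfrac{\sqrt{3}}{2}L^2 + L\sqrt{2} + \tfrac{1}{2}.
\]
Substituting the infimum admissible value $L = hR + \sqrt{2}$ and expanding $L^2 = h^2R^2 + 2\sqrt{2}\,hR + 2$, I would regroup terms and, where necessary, slightly loosen the lower-order constants so as to land on the stated closed form $\tfrac{\sqrt{3}}{2}h^2R^2 + 2hR + (hR+2)\sqrt{2}$.

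The main obstacle is the passage from the real hexagonal lattice to the integer one: the construction must simultaneously (i) keep each of the shortest-vector inequalities strict after rounding and (ii) control the determinant to within the stated lower-order correction. Naive rounding essentially works, but tightening the bound to exactly the coefficients $2$ and $\sqrt{2}$ on $hR$ appearing in the statement requires a careful accounting of the integer-rounding errors in both the validity check and the determinant expansion.
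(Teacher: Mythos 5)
Your overall strategy is the same as the paper's: construct an explicit near-hexagonal integer basis, check validity via the shortest lattice vectors and the fact that distance $>hR$ forces more than $h$ hops, and bound $|\det(v_1,v_2)|$ by a multilinear expansion around the ideal hexagonal determinant $\frac{\sqrt{3}}{2}L^2$. Your validity argument (reducedness plus checking $v_1$, $v_2$, $v_1-v_2$) is if anything more explicit than the paper's.

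The gap is quantitative but real: your construction cannot reach the stated constants. Because you must take $L>hR+\sqrt{2}$ to keep $|v_1-v_2|>hR$ after symmetric rounding, your leading term alone is $\frac{\sqrt{3}}{2}(hR+\sqrt{2})^2=\frac{\sqrt{3}}{2}h^2R^2+\sqrt{6}\,hR+\sqrt{3}$, and adding the cross terms $\sqrt{2}L+\tfrac12$ gives a coefficient of $hR$ equal to $\sqrt{6}+\sqrt{2}\approx 3.86$, strictly larger than the $2+\sqrt{2}\approx 3.41$ in the statement. You propose to ``slightly loosen the lower-order constants so as to land on the stated closed form,'' but loosening only makes a bound larger; to land on the statement you would have to \emph{tighten}, which this construction does not permit. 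The paper avoids the inflation of $L$ by an asymmetric choice: it puts $B$ at distance exactly $hR$ and angle $\pi/3$, takes $V_2=B+(\gamma,\delta)$ with $0\le\gamma,\delta\le 1$ (rounding \emph{away} from the origin, so $|\vec{UV_2}|\ge hR$ automatically), and then sets $V_1=(2x_2,0)$, so that $V_1-V_2=(x_2,-y_2)$ is the mirror image of $V_2$ and has the same length --- all three critical vectors are $\ge hR$ with no $+\sqrt{2}$ padding of the base length. The determinant is then split as $\det(\vec{UA},\vec{UB})+\det(\vec{AV_1},\vec{UB})+\det(\vec{UV_1},\vec{BV_2})$ with $A=(hR,0)$, $|\vec{AV_1}|\le 2$, $|\vec{BV_2}|\le\sqrt{2}$ and $|\vec{UV_1}|\le hR+2$, which yields exactly $\frac{\sqrt{3}}{2}h^2R^2+2hR+(hR+2)\sqrt{2}$. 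Your version does establish $\frac{\sqrt{3}}{2}h^2R^2+O(R)$, so the asymptotic corollary downstream is unaffected, but as a proof of the theorem as stated it falls short in the first-order term.
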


\begin{proof}
We proceed with a constructive proof, exhibiting two valid vectors which yield 
the result, using an approximation of an hexagonal lattice.

Figure~\ref{fig:hexagon-choice} illustrates how some points $V_1$ and
$V_2$ are constructed. 
\begin{figure}[!h]
\begin{center}
{\includegraphics[width=0.6\linewidth]{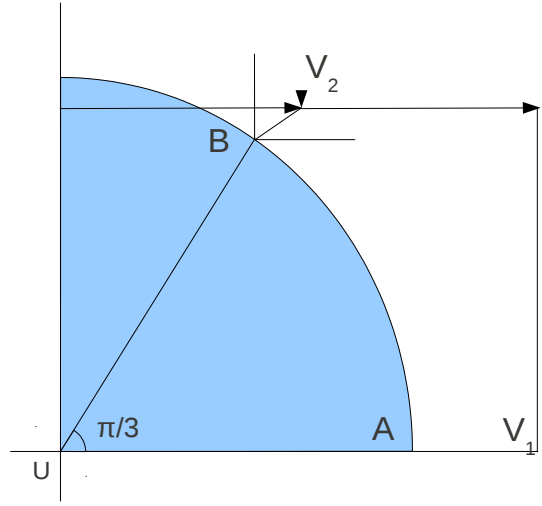}}
\caption{Selecting vectors for a near-hexagonal lattice}\label{fig:hexagon-choice}
\end{center}
\end{figure}
\begin{itemize}
\item Starting from the point $U$, the line with an 
angle $\frac{\pi}{3}$ with the horizontal line is considered,
and its intersection with the circle of radius $h R$ yields the point $B$.
\item Next, the closest point of $B$ on the grid
  having higher coordinates $x$ and $y$ than $B$ is sought and is $V_2$ with coordinates $(x_2, y_2)$.
\item Then $V_1$ with coordinates $(x_1, y_1)$ is selected with
  $(x_1,y_1) = (2x_2, 0)$.
\end{itemize}

Notice that by construction $x_1 \ge h R$, and we have a valid choice of
vectors $\vec{UV_1}$ and $\vec{UV_2}$.

\begin{LongVer}
Denote $(\gamma,\delta)$ the coordinates of $\vec{BV_2}$.
By construction: $0 \le \gamma \le 1$ and $0 \le \delta \le 1$, hence:
$|\vec{BV_2}|  \le \sqrt{2}$. Moreover, 
\begin{eqnarray*}
|\vec{AV_1}| &=& 2x_2 - hR\\
			 &=& 2(hR \cos(\frac{\pi}{3}) + \gamma) -hR, \mathrm{~with~ }\gamma \le 1\\
			 &\le& 2
\
\end{eqnarray*}

Consequently, we can write $n_c$, the number of colors in the associated coloring as:
\begin{small}
\begin{eqnarray*}
n_c & = & det(\vec{UV_1},\vec{UV_2}) \\
    & = & det(\vec{UA},\vec{UB}) + det(\vec{AV_1},\vec{UB}) + det(\vec{UV_1},\vec{BV_2}) \\
    & \le & det(\vec{UA},\vec{UB}) + |\vec{AV_1}||\vec{UB}| + |\vec{UV_1}||\vec{BV_2}| \\
    & \le & \frac{\sqrt{3}}{2}h^2R^2 + 2 hR + (hR+2) \sqrt{2}.
\end{eqnarray*}
\end{small}
\end{LongVer}
\begin{ShortVer}
Writing the lattice determinant, we see that $n_c  \le  \frac{\sqrt{3}}{2}h^2R^2 + 2 hR + (2+hR) \sqrt{2}$.
\end{ShortVer}
\end{proof}
\begin{comment}
The two previous theorems show that the number of colors
of an optimal periodic $h$-hop coloring of the grid
(with {\em VCM})\XXX { ??? the result can be applied to VCM too, but the proof is not restricted/directly concerned with VCM} is close to the number of colors in an absolutely optimal
coloring of an hexagonal lattice,
at least asymptotically. This result may be summarized as:
\end{comment}

\begin{remark}
\label{rem:near-square-vector}
An alternate, simpler, choice of vectors is to compute the integer 
$\lambda = \lfloor hR \rfloor + 1$, and select the vectors with coordinates
$u_1' = (\lambda, 0)$ and $u_2' = (0, \lambda)$. The number of colors is
higher than for the near-hexagonal previous vectors 
(it is $h^2R^2 (1+O(\frac{1}{R}))$), hence the vectors cannot yield the result 
of the next section, but the vectors can be used for an upper bound
when $R \le \sqrt{2}$.
\end{remark}

\subsubsection{Asymptotic Number of Colors}
%---------------------------

\begin{theorem}\label{theo:opt-color-num} 
The number of colors $n_c(R) $ of an optimal periodic $h$-hop coloring  for a fixed $h$
verifies:
$$n_c(R) = \frac{\sqrt{3}}{2}h^2R^2 (1 + O(\frac{1}{R}))$$
when $R \rightarrow \infty$.
\label{th:asymptotic}
\end{theorem}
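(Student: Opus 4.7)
The plan is to obtain the statement as a direct consequence of the two bounds established in the preceding subsections, by sandwiching $n_c(R)$ between them and expanding each side asymptotically.

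First, I would argue that both Theorem~\ref{lowerBoundTheorem} and Theorem~\ref{theo:upper-bound} apply to the quantity $n_c(R)$ defined as the number of colors of an optimal periodic $h$-hop coloring. The lower bound is actually stated for arbitrary (not necessarily periodic) colorings of $\ZZ^2$, so in particular it holds for periodic ones, giving $n_c(R) \ge \frac{\sqrt{3}}{2}h^2(R-\sqrt{2})^2$ as soon as $R>\sqrt{2}$. The upper bound is constructive: Theorem~\ref{theo:upper-bound} exhibits an explicit pair of generator vectors $(\vec{UV_1},\vec{UV_2})$ that, via VCM, yield a valid periodic $h$-hop coloring with at most $\frac{\sqrt{3}}{2}h^2R^2 + 2hR + (hR+2)\sqrt{2}$ colors, so by definition of the optimum, $n_c(R)$ is bounded above by the same quantity.

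Next, I would expand each bound. Factoring $\frac{\sqrt{3}}{2}h^2R^2$ from the lower bound gives
\[
\frac{\sqrt{3}}{2}h^2(R-\sqrt{2})^2
 = \frac{\sqrt{3}}{2}h^2R^2\Bigl(1 - \tfrac{2\sqrt{2}}{R} + \tfrac{2}{R^2}\Bigr)
 = \frac{\sqrt{3}}{2}h^2R^2\bigl(1 + O(\tfrac{1}{R})\bigr),
\]
and factoring the same quantity from the upper bound gives
\[
\frac{\sqrt{3}}{2}h^2R^2 + 2hR + (hR+2)\sqrt{2}
 = \frac{\sqrt{3}}{2}h^2R^2\bigl(1 + O(\tfrac{1}{R})\bigr),
\]
where the $O(\cdot)$ constants depend only on $h$ (which is fixed). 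Chaining the two inequalities yields
\[
\frac{\sqrt{3}}{2}h^2R^2\bigl(1 + O(\tfrac{1}{R})\bigr)
 \le n_c(R) \le
\frac{\sqrt{3}}{2}h^2R^2\bigl(1 + O(\tfrac{1}{R})\bigr),
\]
which is exactly the claim.

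There is no real obstacle here; the work has already been done in Theorems~\ref{lowerBoundTheorem} and~\ref{theo:upper-bound}. The only thing to be mildly careful about is that the two bounds are expressed with different lower-order terms (one has $(R-\sqrt{2})^2$, the other has an additive $O(R)$ correction), so I would just make explicit that the multiplicative relative error in each case is $O(1/R)$ with constants independent of $R$ (but possibly depending on $h$), which is precisely what the $O$-notation in the theorem statement asserts for fixed $h$.
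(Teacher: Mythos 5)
Your proof is correct and follows exactly the paper's own argument, which simply combines the lower bound of Theorem~\ref{lowerBoundTheorem} with the constructive upper bound of Theorem~\ref{theo:upper-bound}; you merely spell out the asymptotic expansions that the paper leaves implicit. No difference in approach worth noting.
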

\begin{proof}
Combining the lower bound and the upper bound of the two theorems~\ref{lowerBoundTheorem} and~\ref{theo:upper-bound}
yields the result.
\end{proof}

\begin{corollary}
Theorem~\ref{theo:opt-color-num} is true even considering periodic and non periodic colorings.
\end{corollary}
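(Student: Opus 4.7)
The plan is to observe that the sandwich argument used for Theorem~\ref{theo:opt-color-num} actually survives dropping the periodicity assumption, essentially for free, because each of the two bounds already has the right scope.

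First, I would revisit Theorem~\ref{lowerBoundTheorem}. Its proof nowhere uses periodicity: it simply fixes an arbitrary $h$-hop coloring, picks an arbitrary color class $S_c$, lower-bounds pairwise distances in $S_c$ via Lemma~\ref{hhopprop}, and invokes the Thue--T\'oth bound on circle packing density to upper-bound the density of $S_c$. The resulting inequality $n_c(R) \ge \frac{\sqrt{3}}{2}h^2(R-\sqrt{2})^2$ therefore holds for every valid $h$-hop coloring of $\ZZ^2$, periodic or not.

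Next, for the upper bound, let $n_c^{\mathrm{per}}(R)$ denote the minimum number of colors over valid periodic $h$-hop colorings and $n_c(R)$ the minimum over all valid $h$-hop colorings. Since every periodic coloring is in particular a coloring, one has trivially $n_c(R) \le n_c^{\mathrm{per}}(R)$. Theorem~\ref{theo:upper-bound} constructs an explicit pair of generator vectors with $n_c^{\mathrm{per}}(R) \le \frac{\sqrt{3}}{2}h^2R^2 + 2hR + (hR+2)\sqrt{2}$, and therefore the same inequality holds for $n_c(R)$.

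Combining these two bounds, for $R > \sqrt{2}$ and fixed $h$,
\[
\tfrac{\sqrt{3}}{2}h^2(R-\sqrt{2})^2 \;\le\; n_c(R) \;\le\; \tfrac{\sqrt{3}}{2}h^2R^2 + 2hR + (hR+2)\sqrt{2},
\]
and both sides equal $\frac{\sqrt{3}}{2}h^2R^2(1 + O(1/R))$ as $R \rightarrow \infty$, giving the claimed asymptotic for general colorings. There is no real obstacle here: the only thing to check carefully is that nothing in Theorem~\ref{lowerBoundTheorem} implicitly assumes periodicity (it does not), and that the direction of the inequality $n_c \le n_c^{\mathrm{per}}$ is the one we need (it is). Hence the corollary follows immediately from the two theorems as stated.
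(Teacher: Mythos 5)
Your argument is correct and is essentially the paper's own: the lower bound of Theorem~\ref{lowerBoundTheorem} is already stated for arbitrary (not necessarily periodic) colorings, and the general optimum is at most the periodic optimum because a periodic coloring is a special case, so the same sandwich yields the asymptotic. You simply spell out more explicitly what the paper's one-line proof leaves implicit.
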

\proof
A periodic coloring is a special case of general coloring (including periodic and non periodic). Hence, the optimal number of colors in general coloring is less than or equal to $n_c$.
\endproof
\begin{corollary}
{\em VCM} is asymptotically optimal even considering all possible valid colorings (even non-periodic). In other terms {\em VCM} is
an $(1+g(R))$-approximation of the optimal coloring(s) of the grid with some $g$ verifying $g(R) \rightarrow 0$ (precisely: $g(R) = O(\frac{1}{R})$) 
when $R \rightarrow \infty$.
\end{corollary}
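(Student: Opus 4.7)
The plan is to combine the upper bound from Theorem~\ref{theo:upper-bound} (which is achieved by a concrete periodic coloring constructible via VCM) with the lower bound from Theorem~\ref{lowerBoundTheorem} (which holds for \emph{any} coloring, periodic or not). Since VCM searches for an optimal periodic coloring, the number of colors it uses, call it $n_{\text{VCM}}(R)$, is at most the number of colors of the near-hexagonal periodic coloring constructed in the proof of Theorem~\ref{theo:upper-bound}, namely at most $\frac{\sqrt{3}}{2}h^2R^2 + 2hR + (hR+2)\sqrt{2}$. On the other hand, if $n_{\text{opt}}(R)$ denotes the minimum number of colors over \emph{all} valid $h$-hop colorings of $\ZZ^2$ (periodic or not), Theorem~\ref{lowerBoundTheorem} gives $n_{\text{opt}}(R) \geq \frac{\sqrt{3}}{2}h^2(R-\sqrt{2})^2$ whenever $R > \sqrt{2}$.

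Next, I would form the approximation ratio and show it behaves like $1 + O(1/R)$. Writing
$$\frac{n_{\text{VCM}}(R)}{n_{\text{opt}}(R)} \;\le\; \frac{\frac{\sqrt{3}}{2}h^2R^2 + 2hR + (hR+2)\sqrt{2}}{\frac{\sqrt{3}}{2}h^2(R-\sqrt{2})^2},$$
and dividing numerator and denominator by the common leading term $\frac{\sqrt{3}}{2}h^2R^2$, the numerator becomes $1 + O(1/R)$ (the correction terms are linear in $R$ while the leading term is quadratic). For the denominator, expanding $(R-\sqrt{2})^2 = R^2(1 - \sqrt{2}/R)^2$ gives $(1 - \sqrt{2}/R)^2 = 1 - 2\sqrt{2}/R + 2/R^2$, so the denominator is $1 - O(1/R)$. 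Taking the reciprocal and using the standard expansion $1/(1-x) = 1 + O(x)$ for $x = O(1/R)$, the ratio is therefore $(1 + O(1/R))(1 + O(1/R)) = 1 + O(1/R)$. Setting $g(R) := n_{\text{VCM}}(R)/n_{\text{opt}}(R) - 1$ then yields $g(R) = O(1/R)$, which is the claim.

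The argument is essentially a direct calculation, so there is no real conceptual obstacle; the only point requiring a little care is to make sure the two $O(1/R)$ terms (one from expanding the upper bound, one from expanding the lower bound) combine correctly when the ratio is inverted, and that the constants hidden in the $O$-notation may depend on $h$ (which is fixed throughout) but not on $R$. One should also remark explicitly that VCM returns at least as few colors as the explicit near-hexagonal vectors of Theorem~\ref{theo:upper-bound} because its Optimal Vector Search component (Section~\ref{sec:vcm-ovs}) scans all reduced generator vectors and retains one minimizing the determinant, so the constructive upper bound is automatically an upper bound on $n_{\text{VCM}}(R)$.
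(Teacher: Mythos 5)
Your proposal is correct and follows essentially the same route as the paper: the paper likewise takes the constructive near-hexagonal upper bound of Theorem~\ref{theo:upper-bound} as an upper bound on what VCM's optimal vector search returns, combines it with the universal lower bound of Theorem~\ref{lowerBoundTheorem} (via Theorem~\ref{theo:opt-color-num} and its preceding corollary), and concludes the $1+O(1/R)$ ratio. Your version merely makes the ratio expansion explicit, which the paper leaves implicit.
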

\begin{proof}
{\em VCM} will find vectors with better or equal performance than those in Theorem~\ref{theo:upper-bound}. Indeed, in the worst case, these generator vectors for the ``near-hexagonal'' lattice will be selected by {\em VCM}.
\end{proof}

\subsection{Finding Optimal Vectors}\label{sec:complexity}
%\subsection{Complexity of the Vector Method}\label{sec:complexity}
%==================================================================
The complexity of {\em VCM} lies in the generator vectors computation and in their validity check. In this section, we show how to limit the set of candidate vectors.\\
%For any given node $U$, 
Let $u_1,u_2$ be two candidate vectors and $\theta$ be the angle between them. We search ${l_1}_{min}, {l_1}_{max}$ (respectively ${l_2}_{min},{l_2}_{max}$) the lower and upper bounds of the length of $u_1$ (respectively $u_2$).\\
1. Considering a $h-hop$ coloring, the vectors $u_1$ and $u_2$ must be valid. %That is, $V_1$ and $V_2$ are more than $h-$hop away from $U$.
According to Lemma~\ref{hhopprop}, we have:  %XXXX R or R-sqrt(2)
\begin{equation}\label{minLengthEqu2}
|u_1| > (R - \sqrt{2})\ h \mathrm{~and~} |u_2| > (R-\sqrt{2}) \ h. \\
\end{equation}
%That is, 
%\begin{equation}\label{minLengthEqu2}
%| UV_1 |\ and\ | UV_2 |\ are\ >(R-\sqrt{2})h.
%\end{equation}\\
2. As we said, in order to reduce the set of candidate generator vectors, we reduce the size of these vectors by using the lattice reduction algorithm of Gauss. A consequence of Gauss property~\ref{GaussPr} is that: $| cos\theta| \le \frac{1}{2}$, and hence: 
\begin{equation}\label{gaussEqu}
|\sin \theta| \ge \frac{\sqrt{3}}{2}.
\end{equation}
3. As shown in Theorem~\ref{lowerBoundTheorem},\\
\newline
 $det(u_1,u_2) \le S_h=\frac{\sqrt{3}}{2}h^2R^2 + 2 hR + (2+hR) \sqrt{2}$.\\
 \newline 
It results: $|u_1| |u_2| |\sin \theta| \le S_h.$ Using~\ref{gaussEqu}, we get:
\begin{equation}\label{v1Equa}
 \frac{\sqrt{3}}{2} |u_1||u_2| \le |u_1||u_2| | \sin \theta| \le S_h 
\end{equation}\\
And as $|u_1|\le |u_2|$, from~\ref{v1Equa} we have: 
\begin{equation}\label{v1SquareEqua}
\frac{\sqrt{3}}{2} |u_1|^2 \le S_h.
\end{equation}

We now separate the cases where $R > \sqrt{2}$ and $R \le \sqrt{2}$

\subsubsection{Case $R > \sqrt{2}$}

Using~\ref{minLengthEqu2}, from~\ref{v1SquareEqua} we get:
\begin{equation}
 \frac{\sqrt{3}}{2}|u_1|(R-\sqrt{2})h < \frac{\sqrt{3}}{2}|u_1|^2 \le S_h.
\end{equation}
And using~\ref{minLengthEqu2} in~\ref{v1Equa}: 
\begin{equation}
\frac{\sqrt{3}}{2}|u_2| (R-\sqrt{2})h < S_h.
\end{equation}

%\begin{LongVer}
To summarize, the two generator vectors should verify for $R>\sqrt{2}$:
\begin{Equation}
{l_1}_{min} < |u_1| \le {l_1}_{max} \mathrm{~and~}
{l_2}_{min} < |u_2| < {l_2}_{max}
\end{Equation}
with: 
\begin{System}\label{vectModule}
{l_1}_{min} = h(R-\sqrt{2})  \\
{l_1}_{max} = \sqrt{\frac{2}{\sqrt{3}}S_h} \\ %hR+[1+f(R)]\\
{l_2}_{min} = h(R-\sqrt{2})\\
{l_2}_{max} = \frac{2}{\sqrt{3}} \frac{S_h}{h(R-\sqrt{2})} % [h^2R^2\frac{1+f(R)}{R-\sqrt{2}}]^2\\
\end{System}

\begin{comment}
Using these equations, the new problem becomes:
\begin{definition}
Use a $h$-hop periodic coloring, where the color of the origin node $U$ is reproduced at nodes $V_1$ and $V_2$ with coordinates $(x_1,y_1)$ and $(x_2,y_2)$ 
minimizing the absolute value of:\\
\begin{equation}
\label{newequationminimize}
| x_1y_2 -x_2y_1 |
\textsl{•}
\end{equation}
and meeting the constraints:\\
\begin{scriptsize}
\begin{System}
\label{newequationconstraint}
y_1 \geq 0\\
y_2 \geq 0\\
x_1y_2-x_2y_1 \neq 0\\
gauss???????????????????,\\
|\vec{UV_1}_{min}| = h^2(R-\sqrt{2})^2 < x_1^2+y_1^2 \leq |\vec{UV_1}_{max}| = h^2R^2+[1+f(R)]\\
%bound V1 = \frac{4}{\sqrt{3}}hR+\frac{2\sqrt{2}}{\sqrt{3}}(2+hR) \\%
|\vec{UV_2}_{min}| = h^2(R-\sqrt{2})^2 < x_2^2+y_2^2 < |\vec{UV_2}_{max}| =[h^2R^2\frac{1+f(R)}{R-\sqrt{2}}]^2\\
%v2 bound = \frac{2}{\sqrt{3}}\frac{1}{(R-\sqrt{2})h}^2[\frac{\sqrt{3}}{2}h^2R^2 + 2 hR + (2+hR) \sqrt{2}]\\
\forall\ W(\alpha,\beta) \in \Lambda(\vec{UV_1},\vec{UV_2})\ / \alpha  \\
\ or \beta > \frac{2\sqrt{3}R}{3(R-\sqrt{2})}, |\vec{UW}| > Rh.
%V_1 \notin h-hop(U)\\
%V_2 \notin h-hop(U)\\
%V_2 \notin \bigcup_{k=1}^h k-hop(V_1)  XXXX
\end{System}
\end{scriptsize}
With $f(R)= \frac{2}{\sqrt{3}}\frac{2hR+(2+hR)\sqrt{2})}{h^2R^2}$.
\end{definition}
\textsl{•}
\end{comment}

In practice, to compute the two generator vectors, we determine
% the integers $x_{1min} ,x_{1max}$, $y_{1min},y_{1max}$, and $x_{2min}, x_{2max}$, $y_{2min}, y_{2max}$, 
the upper and the lower bounds of the coordinates of $u_1$, and $u_2$ using the System~\ref{vectModule}.
%For limited space reasons, we just give the bounds on the coordinates:
Notice that we can search the vectors in the half plane $(y\ge0)$, because if $u_1$ and $u_2$ are generator vectors, then their symmetric vectors with respect to ($y=0$) axis are also generator vectors. Consequently, we have:

\begin{System}\label{v1coordinates}
{-l_1}_{max} \le x_1 \le {l_1}_{max} \\
0 \le y_1 \le {l_1}_{max}
\end{System}
and 
\begin{System}\label{v2coordinates}
-{l_2}_{max} \le x_2 \le {l_2}_{max} \\
0 \le y_2 \le {l_2}_{max}
\end{System}

\subsubsection{Case $R \le \sqrt{2}$}

We use ${l_1}_{min} = 0$. In addition, 
we can use the bound implied by the vectors proposed in 
Remark~\ref{rem:near-square-vector}, that is: $S_s = (hR+1)^2$, which replaces $S_h$
for the computation of ${l_1}_{max}$. Then, instead of a fixed bounds
for ${l_1}_{max}$ and ${l_2}_{max}$, we propose a bound depending on $u_1$, 
by using (\ref{v1Equa}) we have: ${l_2}_{max}(u_1) = \frac{2 \sqrt{3} S_s}{3 |u_1|} $, and because $|u_2|\ge|u_1|$ we have: ${l_1}_{min}(u_1) = |u_1|$. Hence:

\begin{System}\label{eq:set-u1-u2-small-R}
0 < |u_1| \le hR+1 \\
|u_1| \le |u_2| \le \frac{2 \sqrt{3} (hR+1)^2}{3 |u_1|}
\end{System}

\begin{comment}
\begin{System}\label{v1coordinates}
-\frac{1}{2}|\vec{UV_1}_{max}|\le x_1 \le \frac{1}{2}|\vec{UV_1}_{max}|\\
\frac{\sqrt{3}}{2}|\vec{UV_1}_{min}|\le y_1 \le \frac{1}{2}|\vec{UV_1}_{max}|
\end{System}
and 
\begin{System}\label{v2coordinates}
 -|\vec{UV_2}_{max}| \le x_2 \le -\frac{1}{2}|\vec{UV_2}_{max}|  \ or\   \\ 
\frac{1}{2}|\vec{UV_2}_{min}| \le x_2 \le |\vec{UV_2}_{max}| \\
0 \le y_2 \le \frac{\sqrt{3}}{2}|\vec{UV_2}_{max}|
\end{System}
\end{comment}

\subsubsection{Method OVS}

To find the optimal vectors, we define Method OVS.\\
\mybox{
\begin{method}[OVS] $ $\\
1. The first step is to search $S_1$ the initial set of generator vectors $u_1$ and $u_2$. $S_1$ is the set of vectors having as coordinates the integers $(x_1,y_1)$ and $(x_2,y_2)$ verifying  System~\ref{v2coordinates} if $R > \sqrt{2}$
and System~\ref{eq:set-u1-u2-small-R} if $R \le \sqrt{2}$. \\
%% Indeed, vectors with coordinates verifying System~\ref{v1coordinates}, verify also System~\ref{v2coordinates}. So to optimize the vector search phase, we only consider the System~\ref{v2coordinates}. \\% and~\ref{v2coordinates}.\\
2. Now, the set $S_2$ should be filtered to keep only reduced and valid vectors. Indeed, for each couple of vectors $(u_1,u_2)$ in $S_1$, we should verify:\\
2.1. $(u_1,u_2)$ are reduced, that is they verify System~\ref{GaussPr}.\\
2.2. to check the validity of the coloring, two cases are possible: \\
2.2.1 if $R>\sqrt{2}$ apply Method VC2.\\%for each lattice node $W$ with lattice coordinates $(\alpha,\beta)$, with $\alpha$ and $\beta$ $<\mu(R)=\frac{2\sqrt{3}R}{3(R-\sqrt{2})}$ check if this node is at stricly more than $h$ hops from the origin node.\\
2.2.2. otherwise, apply Method VC1.\\% check if no node in the $h$-hop neighborhood of the origin uses the same color as it.\\
3. After the step 2, we  obtain the set of valid reduced vectors. Now, the optimal vectors are the vectors having the smallest absolute value of their determinant.\\

Notice that the search of the optimal vectors can be done by a central unit, that distributes the vectors to all nodes. It is also possible that each node in the grid computes the two generator vectors.
\end{method}
%\vspace{-10pt}
}

We can now evaluate the complexity of {\em VCM} that lies in the generator vectors computation and in their validity check. 
\begin{theorem}
Depending on {\em VC} method, {\em VCM} complexity is in $\Theta(R^6)$ for Method VC1 and $\Theta(R^4)$ for Method VC2.
\end{theorem}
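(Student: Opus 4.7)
The plan is to establish the complexity by bounding the size of the candidate vector set and multiplying it by the per-pair cost of the validity check. The total cost is the dominant term from Step 1 (enumeration of $S_1$) and Step 2 (reducedness and validity filtering), since Steps 2.1 and 3 cost $O(1)$ per pair.

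First, I would bound $|S_1|$. From System~(\ref{vectModule}), since $S_h = \Theta(R^2)$ for fixed $h$, both ${l_1}_{max}$ and ${l_2}_{max}$ are $\Theta(R)$. Hence, the Systems~(\ref{v1coordinates}) and~(\ref{v2coordinates}) describe sets of integer coordinates lying in bounding boxes of side $\Theta(R)$, giving $\Theta(R^2)$ candidate integer points for each of $u_1$ and $u_2$, and therefore $|S_1| = \Theta(R^4)$. The matching lower bound $\Omega(R^4)$ holds because the coordinate ranges are intervals of length $\Theta(R)$ in each dimension, so genuinely $\Theta(R^4)$ pairs must be enumerated. (For the case $R \le \sqrt{2}$, $R$ is bounded and the complexity claim is trivial.)

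Second, I would bound the per-pair cost of validity checking. For Method VC1, by Theorem~\ref{method:disk-zero-check}, one must verify that no node in the $h$-hop neighborhood of the origin reuses the color of $O$. This neighborhood is contained in a disk of radius $hR$, hence contains $\Theta(R^2)$ grid points. For Method VC2, by Theorem~\ref{method:small-grid-check}, one only checks nodes of lattice-coordinates $(\alpha,\beta)$ with $|\alpha|,|\beta|<\mu(R)$, and since $\mu(R) \to \frac{2\sqrt{3}}{3} < 2$ as $R \to \infty$, the number of such $(\alpha,\beta)$ is bounded by a constant independent of $R$, hence $O(1)$ per pair.

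Multiplying the two contributions yields the complexities: with VC1, $\Theta(R^4)\cdot\Theta(R^2) = \Theta(R^6)$; with VC2, $\Theta(R^4)\cdot\Theta(1) = \Theta(R^4)$. The main (minor) obstacle is just to confirm that no intermediate step dominates: checking the Gauss reducedness of $(u_1,u_2)$ through System~(\ref{GaussPr}) is $O(1)$, and computing $|\det(u_1,u_2)|$ and tracking the running minimum is also $O(1)$ per pair, so both are absorbed into the enumeration cost. This gives the claimed $\Theta(R^6)$ and $\Theta(R^4)$ bounds.
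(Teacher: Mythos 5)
Your proposal is correct and follows the same decomposition as the paper's (much terser) proof: a $\Theta(R^2)$ search space per vector giving $\Theta(R^4)$ candidate pairs, multiplied by a $\Theta(R^2)$ per-pair cost for VC1 (the $h$-hop disk of radius $hR$) versus $\Theta(1)$ for VC2 (since $\mu(R)$ is bounded as $R\rightarrow\infty$). You simply supply the justifications the paper leaves implicit.
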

\begin{proof}
The vector search phase is in $\Theta(R^2)$ for each vector. The validity check is in $\Theta(R^2)$ for Method VC1 and $\Theta(1)$ for Method VC2.
\end{proof}

\section{Summary: How to Apply VCM in Practice}\label{sec:vcm-practice}
%----------------------------------------------
This section summarizes the previous sections. In practice, to apply {\em VCM}, we start from a set of sensor nodes arranged as a two-dimensional
lattice (identified by their integer coordinates).
In reality, in an actual network, the set of neighbors will not be exactly 
given by the set of nodes within a fixed range $R$. However, notice that 
a valid $h$-hop coloring for a given $R$, is also a valid $h$-hop coloring for 
$R' < R$ (although likely non-optimal). Hence, we start by selecting the value of $R$, a radio range such as two nodes at a distance greater than $R$ are never neighbors
(may be using measurements or neighborhood detection), and a value $h$.
Then, each node proceeds as follows:\\
\noindent
1. Find the optimal valid vectors using the Method OVS.\\
2. Each node computes its color by applying either VCM-NCC~1 or VCM-NCC~2.\\
%2.1. It computes the values $(c_1,c_2)$ for each grid node in the parallelogram defined by the two generator vectors using System~\refeq{colorcompute}. Sorting the set of the values $(c_1,c_2)$ yields a mapping between every possible value of $(c_1,c_2)$ and an unique integer ${1, 2, \ldots |d|}$.\\
%2.2. Knowing its coordinates in the grid, each node deduces the two components $c_1$ and $c_2$ according to System~\refeq{colorcompute} and then its color from the previous mapping.\\

We can notice that {\em VCM} allows each node to know its color in a single round.

\section{Coloring Results with VCM}\label{resultsSec}
%==================================================================

Note that further examples of colorings are available externally at 
\cite{bib:VCM-site}.

\subsection{Examples of Vectors}
%--------------------------------

Table~\ref{TableVectorGrid} gives for different radio ranges two vectors generating the optimal periodic pattern as well as the minimal number of colors obtained by a periodic pattern, for both a 2-hop coloring and a 3-hop coloring. The '*' symbol highlights the optimality of the number of colors used. 
\begin{table}[!h]
\caption{Vectors generating the optimal number of colors.}
\label{TableVectorGrid}
\begin{center}
\begin{tabular}{|c||c|c|c||c|c|c|}
\hline
Radio & \multicolumn{3}{|c||}{2-hop coloring}& \multicolumn{3}{|c|}{3-hop coloring}\\
           \cline{2-7}
range& vector1& vector2& colors& vector1& vector2& colors\\
\hline \hline
1& (2,1) & (-1,2)& 5*&(2,2) & (-2,2)& 8*\\
\hline
1.5& (-3,0)&(0,3)& 9*&(4,0) & (0,4)& 16*\\
\hline
2& (3,2)& (-2,3)& 13* &(4,3) & (-3,4)& 25*\\
\hline
2.5&(4,3)& (-1,5)& 23* &(5,5) & (-7,2)& 45*\\
\hline
3& (5,3)& (-1,6)& 33* &(7,5) & (-8,4)& 68*\\
\hline
3.5&(5,4)& (-6,3)& 39*& (8,5) & (-8,5)& 80*\\
\hline
4& (7,3)& (-6,5)& 53*&(8,8) & (-11,3)& 112*\\
\hline
4.5& (9,2) & (-6,7)& 75*&(13,3) & (-9,10)& 157*\\
\hline
5& (9,4)& (-1,10) & 94*&(14,4) & (3,15)& 198*\\
\hline
5.5& (9,6)& (-1,11)&105*&(16,0) & (8,14)& 224*\\
\hline
6&(11,4)& (-9,8)&124* & (17,4)& (-12,13)& 269*\\
\hline
6.5&(13,1)& (-7,11)&150*&(-19,0)&(9,17)& 323*\\
\hline
7& (10,9)&(-4,13)& 166*&(15,13) &(-19, 7)& 352*\\
\hline
\end{tabular}
\end{center}
\end{table}
\newline
We observe that 2-hop coloring of the grid with radio range $R=3$ is not equivalent to 3-hop coloring of a grid with $R=2$ in terms of the number f colors (33 vs. 25).
We conclude that the optimal number of colors is not determined only by the product $h*R$, but also the values of $h$ and $R$ separately.
\subsection{Comparison with Other Methods}
%-----------------------------------------
Table~\ref{TableVMVariousPrio} depicts the simulation results obtained with the {\em VCM} for various grids, with various radio ranges. The method is compared to a distributed coloring algorithm using line/column as priority assignment heuristics. As observed in Table~\ref{TableVariousPrioNodeNb}, the random priority assignment produces a high number of colors and hence is not presented here.
Results are given for 3-hop coloring. 
%The '*' symbol highlights the optimality of the number of colors used. 
For high radio range values, the number of nodes should be high enough to allow the reproduction of the color pattern. 

\begin{table}[!h]
\caption{Number of colors obtained for 3-hop coloring.}
\label{TableVMVariousPrio}
\begin{center}
\begin{tabular}{|c|c|c|c|}
\hline
Radio & Grid size & \multicolumn{2}{|c|}{Colors} \\ %Method/heuristics
\cline{3-4}
range & & VCM & line/column \\
\hline \hline
1 & 10x10 & 8* & 8*\\
\cline{2-4}
& 20x20 & 8* &8*\\
\cline{2-4}
& 30x30 & 8* &8*\\
%\cline{2-4}
%& 50x50 & 8* & XX\\
\hline
\hline
1.5 & 10x10 & 16* & 16*\\
\cline{2-4}
& 20x20 & 16* &16*\\
\cline{2-4}
& 30x30 & 16* &16*\\
%\cline{2-4}
%& 50x50 & 16* & XX\\
\hline
\hline
2 & 10x10 & 25* & 30\\
\cline{2-4}
& 20x20 & 25* & 33\\
\cline{2-4}
& 30x30 & 25* &33\\
\hline
\hline
3 & 20x20 & 68* & 80\\
\cline{2-4}
& 30x30 & 68* &83\\
\hline
\hline
3.5 & 20x20 & 80* & 91\\
\cline{2-4}
& 30x30 & 80* &91\\
%\hline
%\hline
%4 & 20x20 & 112* & ?\\
%\cline{2-4}
%& 30x30 & 112* &?\\
%\hline 
%\hline
%5 & 20x20 & 198* & ?\\
%\cline{2-4}
%& 30x30 & 198* &?\\
\hline
\end{tabular}
\end{center}
\end{table}
We observe that {\em VCM} provides an optimal 3-hop coloring, for any radio range. This is not true for any other priority assignment tested. Moreover, the number of colors does not depend on the grid size. %Similar results have been obtained for 2-hop coloring, but are not reported here for space reasons.%Furthermore, the impact of the grid size on the number of rounds is very limited.

\section{The Vector Method and Real Wireless Networks}\label{sec:Realwireless}
%==============================================================================
In this paper, {\em VCM} has been described for grid topology since this topology is used by real applications briefly presented in Section~\ref{contextSec}.
Notice however, that wireless communication may differ from what is expected by the theory that often uses simplified models: radio links may be asymmetric, a radio link may exist even if the remote node is at a distance higher than the transmission range or conversely not exist even if the remote node is in the theoretical radio range. 
That is why, the first step in {\em VCM} is to select $R$ such that two nodes that are at a distance greater than $R$ are not neighbors.\\
Another real aspect in wireless sensor networks is nodes late arrival (because of the mobility or in case of late start-up) and nodes disappearance (a node is out of battery for instance). What is the impact of such impairments on {\em VCM}? We can classify these impairments in two categories:\\
1. Radio links disappearance: in this case, {\em VCM} always provides a valid coloring. The periodic coloring may still be optimal, as long as the percentage of missing radio links is below a given threshold $L_1$.\\
2. Radio links appearance: in this case, nodes that should not be neighbors (or heard nodes in case of asymmetric links) are. As a consequence, nodes having the same color may interfere because of these additional radio links. The periodic coloring provided by {\em VCM} may still be perfectly acceptable by the application as long as the percentage of additional radio links is below a given threshold $L_2$.\\

As a further work, we will evaluate the thresholds $L_1$ and $L_2$ and also study how random topologies can be mapped in grid topologies.

\section{Conclusion}\label{conclusion}
%=====================================
In this research report, we have presented a new method called {\em VCM}, the Vector-Based Coloring Method, able to provide an optimal periodic $h$-hop coloring of any grid, with $h$ an integer $\geq1$, for any radio range $R$. This method is easy to use: a single round is needed. It suffices to compute the two generator vectors, as shown in this paper. Knowing its coordinates within the grid, each node deduces its color from a simple computation given in the paper. We have shown that this $h$-hop node coloring is optimal in terms of colors and rounds. We determined also an upper and a lower bound for the number of colors needed to color an infinite grid. {\em VCM} provides the optimal number of colors compared to all possible coloring including non periodic ones. Finally, we discussed how to apply {\em VCM} in real wireless networks. 

\newpage
\section{Annex}
In this Annex, we group mathematical results and grid properties that are useful to study the validity and the optimality of {\em VCM}.

\subsection{Gauss Lattice Reduction}\label{sec:gauss-reduction}
%-------------------------------
For any pair of vectors $u_1$, $u_2$ generating 
a lattice $\Lambda(u_1,u_2)$, the Gauss lattice reduction algorithm
provides two {\em reduced} vectors
$v_1,v_2$ generating exactly the same lattice and 
verifying the System~(\ref{GaussPr})
\begin{System}\label{GaussPr}
|v_1| \le |v_2| \\
2 | v_1 \cdot v_2 | \ \le\ |v_1|^2
\end{System}
Additional properties are that $v_1$ and $v_2$ are also the two shortest
distinct vectors of  $\Lambda(u_1,u_2)$ 
and have the same lattice determinant as $u_1,u_2$.
See for instance \cite{VV07} for more details.
\newline

\subsection{Relation between Number of Hops and Actual Distance}
%------------------------
Hereafter, we introduce some results related to grid networks. These results can be applied to {\em VCM}, or any other algorithm.

Results in this section are inequalities, establishing links between 
number of hops and actual distance.

\begin{lemma}
\label{distmaxg}
For any point $V$ of $\mathbb{R}^2$, there exists a node $V'$ of the grid $\ZZ^2$ such that $d(V,V') \leq \sqrt{2}/2$.
\end{lemma}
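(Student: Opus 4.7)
The plan is to prove this by an explicit construction: given $V = (x, y) \in \mathbb{R}^2$, I will exhibit a concrete grid node $V'$ of $\ZZ^2$ within distance $\sqrt{2}/2$, namely the componentwise nearest integer to $V$.

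First I would define $V' = (n_x, n_y)$ where $n_x$ is an integer chosen so that $|x - n_x| \le 1/2$ (such an integer exists for any real $x$, for instance $n_x = \lfloor x + 1/2 \rfloor$), and similarly for $n_y$. Then $V' \in \ZZ^2$ by construction. Next, I would bound the Euclidean distance directly:
\begin{equation*}
d(V, V') = \sqrt{(x - n_x)^2 + (y - n_y)^2} \le \sqrt{\left(\tfrac{1}{2}\right)^2 + \left(\tfrac{1}{2}\right)^2} = \frac{\sqrt{2}}{2},
\end{equation*}
which establishes the claim.

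There is no real obstacle here; the lemma is essentially the statement that the covering radius of the integer lattice $\ZZ^2$ in $\mathbb{R}^2$ equals $\sqrt{2}/2$, achieved at the center of a unit square. The only minor subtlety worth mentioning is the tie-breaking rule when $x$ (or $y$) is a half-integer, which is harmless since either choice of rounding still yields $|x - n_x| = 1/2$.
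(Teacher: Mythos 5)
Your proof is correct and rests on the same idea as the paper's: the farthest a point can be from $\ZZ^2$ is at the center of a unit cell, giving distance $\sqrt{2}/2$. Your version is actually more rigorous than the paper's (which just asserts the worst case informally), since you give the explicit rounding construction and the resulting distance bound.
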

\proof
In the worst case, the node $V$ occupies the center of a grid cell. It is at equal distance of two grid nodes that are diagonally opposed. Hence, its distance to one of them is equal to $\sqrt{2}/2$.
\endproof
\begin{lemma}
\label{hhopprop}
For any transmission range $R > \sqrt{2}$, for any grid node $U$, any node $V$ that meets $d(U,V) \leq (R-\sqrt{2})h$ is at most $h$-hop away from $U$.
\end{lemma}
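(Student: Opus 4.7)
The plan is to construct an explicit $h$-hop path from $U$ to $V$ through intermediate grid nodes, by sampling points on the straight segment $UV$ and snapping each to the nearest grid node.

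First, I would parametrize the segment $[UV]$ and place $h+1$ equally spaced points $P_0, P_1, \ldots, P_h$ along it, with $P_0 = U$ and $P_h = V$. By construction, consecutive points satisfy $d(P_i, P_{i+1}) = d(U,V)/h \leq R - \sqrt{2}$. These $P_i$ live in $\mathbb{R}^2$ and are generally not grid nodes, but the endpoints $P_0$ and $P_h$ are, by hypothesis.

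Next, I would apply Lemma~\ref{distmaxg} to each intermediate $P_i$ ($1 \le i \le h-1$) to obtain a grid node $P_i'$ with $d(P_i, P_i') \leq \sqrt{2}/2$. Setting $P_0' = U$ and $P_h' = V$, I get a sequence of grid nodes $U = P_0', P_1', \ldots, P_h' = V$. The triangle inequality then gives, for $0 \le i \le h-1$,
\[
d(P_i', P_{i+1}') \;\le\; d(P_i', P_i) + d(P_i, P_{i+1}) + d(P_{i+1}, P_{i+1}') \;\le\; \tfrac{\sqrt{2}}{2} + (R-\sqrt{2}) + \tfrac{\sqrt{2}}{2} \;=\; R,
\]
so each consecutive pair lies within the radio range and is therefore a pair of neighbors in the unit-disk model. (The endpoints $P_0', P_h'$ only contribute a $\sqrt{2}/2$ term on one side, which only improves the bound.) Hence the sequence is a valid path of at most $h$ hops from $U$ to $V$.

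The only delicate point is a small edge case: if some consecutive $P_i'$ and $P_{i+1}'$ happen to coincide, the path is even shorter than $h$, which is still consistent with the statement ``at most $h$-hop away.'' The inequality $R > \sqrt{2}$ is used precisely to guarantee that the per-segment bound $R - \sqrt{2}$ is non-negative, so that the construction is non-vacuous. No additional obstacle is expected; the proof is essentially a clean triangle-inequality argument combining the hypothesis with Lemma~\ref{distmaxg}.
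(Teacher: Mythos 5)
Your proposal is correct and follows essentially the same route as the paper: divide $[UV]$ into $h$ equal sub-segments, snap each division point to the nearest grid node via Lemma~\ref{distmaxg}, and bound each consecutive pair by $\sqrt{2}/2 + (R-\sqrt{2}) + \sqrt{2}/2 = R$ using the triangle inequality. Your explicit handling of coinciding consecutive nodes is a small added care the paper omits, but the argument is otherwise the same.
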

\proof
We consider the $h-1$ points of $\mathbb{R}^2$ that allow us to divide the segment $[U,V]$ in $h$ equal parts.\\ Let $W_i$ be these nodes, with $i \in [1,h-1]$.\\ For any $i \in [1,h-1]$, let $W'_i$ the grid point the closest to $W_i$. For simplicity reason, we denote $W'_0=U$ and $W'_h=V$. We have $d(U,V) \leq \sum_{i=0}^{h-1} d(W'_i,W'_{i+1})$.\\ 
We have $d(W'_i,W'_{i+1}) \leq d(W'_i,W_i) + d(W_i,W_{i+1}) + d(W_{i+1},W'_{i+1})$.
According to Lemma~\ref{distmaxg}, we have $d(W_i,W'_i) \leq \sqrt{2}/2$. Hence, we get $d(W'_i,W'_{i+1}) \leq \sqrt{2}+ d(W_i,W_{i+1})$. By construction, $d(W_i,W_{i+1})=d(U,V)/h$. \\
If $d(U,V) \leq (R-\sqrt{2})h$, then $d(W'_i,W'_{i+1})\leq R$. Hence, nodes $W_i$ for $i \in \{1,2, \ldots, h-1 \}$ constitute a h-hop path between $U$ and $V$.
\endproof 
\begin{lemma}\label{morethanHhop}
For any transmission range $R$, for any two grid nodes $U$ and $V$, in $h$-hop coloring, if $d(U,V)>hR$ then $U$ and $V$ are at least $(h+1)$-hop away. 
\end{lemma}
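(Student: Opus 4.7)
The plan is to argue by contradiction, leveraging the triangle inequality together with the unit disk communication model. The contrapositive statement is more transparent: if $U$ and $V$ are at most $h$ hops away, then $d(U,V) \le hR$. This is the form I would actually prove.

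First, I would assume, for contradiction, that $U$ and $V$ are at most $h$ hops away. By definition of the hop distance under the unit disk model, there exists a sequence of grid nodes $U = W_0, W_1, \ldots, W_k = V$ with $k \le h$ such that consecutive nodes are neighbors. By the definition of $\Neigh{\cdot}$ from the Notations section, each edge satisfies $d(W_i, W_{i+1}) \le R$.

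Next I would apply the triangle inequality on the Euclidean distances along this path:
\begin{equation*}
d(U,V) \;\le\; \sum_{i=0}^{k-1} d(W_i, W_{i+1}) \;\le\; k R \;\le\; h R.
\end{equation*}
This directly contradicts the assumption $d(U,V) > hR$. Hence $U$ and $V$ cannot be at most $h$ hops apart, which means they are at least $(h+1)$ hops away, as claimed.

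There is no real obstacle here: the statement is essentially a packaging of the triangle inequality under the unit disk graph hypothesis. The only small thing to be careful about is to state explicitly that under the assumption ``$U$ and $V$ are at most $h$ hops apart,'' an actual path of length $\le h$ exists in the neighborhood graph (i.e., using the neighborhood definition from \refsec{sec:notations}), so that the per-edge bound $d(W_i,W_{i+1}) \le R$ is legitimate. Unlike Lemma~\ref{hhopprop}, no auxiliary construction of intermediate grid points or use of Lemma~\ref{distmaxg} is needed, since here we go from hops to distance rather than the other direction.
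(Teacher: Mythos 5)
Your proof is correct and follows essentially the same route as the paper's: a contradiction argument that extracts a path of at most $h$ edges, bounds each edge by $R$ via the unit disk neighborhood definition, and concludes with the triangle inequality. Your write-up is in fact slightly cleaner in its indexing (the paper's sums run over $h$ terms while labeling a $k$-hop path with $k\le h$), but there is no substantive difference.
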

\proof
By contradiction assume that, $U$ and $V$ are $h$-hop away or less. Let $W_i$ be the $k-1$ nodes constituting the $k$-hop path between $U$ and $V$, with $k\le h$. Let $W_1=U$, and $W_h=V$. Since nodes $W_i$ are 1-hop neighbors, we have:
\begin{eqnarray*}
%d(U,V) = 
|\vec{UV}| = |\sum_{i=1}^{h}\vec{W_{i}W_{i+1}}|  \le  \sum_{i=1}^{h}d(W_{i},W_{i+1})
      \le  hR. 
\end{eqnarray*}
Hence the contradiction.
\endproof

~\newline
Let $U$, $V$ be two points of $\ZZ^2$ and 
define $\mathcal{H}(U,V)$ as the number of hops between $U$ and $V$ (it
is an integer). For any $R>0$ (some inequalities are trivially true
when $R \le \sqrt{2}$), the
lemma~\ref{hhopprop} and lemma~\ref{morethanHhop} can be summarized as:
\begin{eqnarray}
d(U,V) \le (R-\sqrt{2})h & \implies & \mathcal{H}(U,V) \le h \\
d(U,V) > (h-1)R & \implies & \mathcal{H}(U,V) \ge h \\
\mathcal{H}(U,V) \ge h & \implies & d(U,V) > (R-\sqrt{2})(h-1) \\
\mathcal{H}(U,V) \le h & \implies & d(U,V) \le hR
\end{eqnarray}

\subsection{Bounds on Distance and Number of Hops of Points on a Lattice}

\begin{lemma}\label{fromGauss}
If $u_1$ and $u_2$ are reduced generator vectors of a lattice $\Lambda(u_1,u_2)$, with $|u_1|\le|u_2|$, then for any vector $w$ such that $w=\alpha u_1+\beta u_2$, and $\alpha, \beta \in \ZZ^2$, we have $|w| \ge\frac{3}{4}\alpha^2 |u_1|^2$, and $|w|\ge\frac{3}{4}\beta^2 |u_1|^2$.
\end{lemma}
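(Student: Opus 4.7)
The plan is to compute $|w|^2$ explicitly by expanding the square and then use the Gauss reduction property to control the cross term. I note that, given the way the right-hand side scales, the intended statement almost certainly involves $|w|^2$ rather than $|w|$; I proceed on that reading.

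First, I would expand:
\begin{equation*}
|w|^2 = \alpha^2 |u_1|^2 + 2\alpha\beta\, (u_1 \cdot u_2) + \beta^2 |u_2|^2.
\end{equation*}
Next, I would invoke the second inequality of System~(\ref{GaussPr}), namely $2|u_1 \cdot u_2| \le |u_1|^2$, which gives
\begin{equation*}
2\alpha\beta\, (u_1 \cdot u_2) \ge -2|\alpha\beta|\cdot|u_1\cdot u_2| \ge -|\alpha\beta|\cdot |u_1|^2.
\end{equation*}
Combined with the first inequality $|u_1| \le |u_2|$, which allows replacing $|u_2|^2$ by $|u_1|^2$ in the last term of the expansion, I obtain the single lower bound
\begin{equation*}
|w|^2 \ge \bigl(\alpha^2 - |\alpha\beta| + \beta^2\bigr)\,|u_1|^2.
\end{equation*}

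The final step is purely algebraic: complete the square in two symmetric ways,
\begin{equation*}
\alpha^2 - |\alpha\beta| + \beta^2 = \Bigl(|\alpha| - \tfrac{|\beta|}{2}\Bigr)^2 + \tfrac{3}{4}\beta^2 = \Bigl(|\beta| - \tfrac{|\alpha|}{2}\Bigr)^2 + \tfrac{3}{4}\alpha^2,
\end{equation*}
each of which yields one of the two desired inequalities $|w|^2 \ge \tfrac{3}{4}\beta^2 |u_1|^2$ and $|w|^2 \ge \tfrac{3}{4}\alpha^2 |u_1|^2$.

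There is no real obstacle here: the only subtle point is remembering to take the sign of $\alpha\beta$ into account so that the Gauss bound $2|u_1 \cdot u_2| \le |u_1|^2$ can be applied uniformly (via the absolute value), and then recognizing the classical identity $\alpha^2 - \alpha\beta + \beta^2 \ge \tfrac{3}{4}\max(\alpha^2,\beta^2)$ that underlies the density of the hexagonal lattice. This same identity is implicitly what is used elsewhere in the paper when the hexagonal packing bound $\tfrac{\sqrt{3}}{2}$ appears (Theorems~\ref{lowerBoundTheorem} and~\ref{theo:upper-bound}), so the $\tfrac{3}{4}$ constant here is the expected one.
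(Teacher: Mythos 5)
Your proof is correct and follows essentially the same route as the paper's: expand $|w|^2$, bound the cross term via $2|u_1\cdot u_2|\le|u_1|^2$ and $|u_1|\le|u_2|$ to reach $\alpha^2-|\alpha||\beta|+\beta^2$, then complete the square (the paper does this via the substitution $\beta=\tfrac{\alpha}{2}+\lambda$, which is the same computation). Your observation that the statement should read $|w|^2$ rather than $|w|$ is also right --- that is what the paper actually proves and uses in Lemma~\ref{lem:mu-bound}.
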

\proof
Let $W$ the node of coordinates $(\alpha, \beta)$. We have:
\begin{small}
\begin{eqnarray*}
|\vec{UW}|^2 & = & \alpha^2 |\myvec{1}|^2+ \beta^2 |\myvec{2}|^2+2\alpha\beta \ (\myvec{1} \cdot \myvec{2})\\	 
& \ge & \alpha^2 |\myvec{1}|^2+ \beta^2 |\myvec{2}|^2 - 2| \alpha | | \beta| \ |\myvec{1} \cdot \myvec{2}|.
\end{eqnarray*}
\end{small}
Since $u_1$ and $u_2$ are reduced vectors, we can use the property given in the System~\refeq{GaussPr}, we get:
\begin{eqnarray*}
|\vec{UW}|^2 & \ge & \alpha^2 |\myvec{1}|^2+ \beta^2 |\myvec{1}|^2 +| \alpha | | \beta | |\myvec{1}|^2. \mathrm{~Hence,~} \\
|\vec{UW}|^2 & \ge & ((|\alpha| - |\beta|)^2 +| \alpha | | \beta |) |\myvec{1}|^2.
\end{eqnarray*}
Notice that this quantity does not change if we change the sign of $\alpha$ or of $\beta$. Thus, we assume $(\alpha\ge0)$, $(\beta\ge0)$, and let $f(\alpha,\beta) = (\alpha - \beta)^2 +\alpha\beta$. \\
By a change of variable $\beta=\frac{\alpha}{2}+\lambda$, we get:\\ 
$f(\alpha,\beta)=\frac{3}{4}\alpha^2+\lambda^2 \ge \frac{3}{4}\alpha^2$. Similarly, we have $f(\alpha,\beta)\ge \frac{3}{4}\beta^2 $. Hence the lemma.
\endproof

\begin{lemma}
\label{lem:mu-bound}
Consider any transmission range $R>\sqrt{2}$, two reduced generator vectors $u_1$ and $u_2$ of the lattice $\Lambda(u_1,u_2)$, and a node $W$ 
%with potentially the same color as a fixed node $U$, i.e. 
with $\vec{UW} = \alpha u_1 + \beta u_2$
for some $\alpha$ and $\beta$ in $\ZZ$. 
Assume also that the point $V_1$ such that $\vec{UV_1} = u_1$ 
is at strictly more than $h$ hops from the $U$.
Then:\newline
%th $U$
%coordinates $\alpha$, $\beta$ in $\Lambda(\vec{UV_1},\vec{UV_2})$, 
if $|\alpha| \ge \mu(R) $ or $|\beta| \ge \mu(R)$ where $\mu(R) = \frac{2\sqrt{3}R}{3(R-\sqrt{2})}$, we have: $W$ is strictly more than $h$ hops away from $U$.
\end{lemma}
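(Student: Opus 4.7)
The plan is to reduce this to a pure distance computation via Lemma~\ref{fromGauss} and then invoke Lemma~\ref{morethanHhop} to translate a Euclidean distance bound into a hop-count bound. The key observation is that the quantity $\mu(R)$ has been carefully chosen so that the inequality chain collapses to exactly $|\vec{UW}| > hR$.

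First, I would use the hypothesis that $V_1$ is strictly more than $h$ hops from $U$. By the contrapositive of Lemma~\ref{hhopprop}, this forces the lower bound $|u_1| = d(U,V_1) > (R - \sqrt{2}) h$. Since $u_1, u_2$ are reduced with $|u_1| \le |u_2|$, Lemma~\ref{fromGauss} applies to $\vec{UW} = \alpha u_1 + \beta u_2$ and gives
\[
|\vec{UW}|^2 \ge \tfrac{3}{4} \max(\alpha^2, \beta^2)\, |u_1|^2.
\]
Combining the two inequalities, if $|\alpha| \ge \mu(R)$ or $|\beta| \ge \mu(R)$, then
\[
|\vec{UW}|^2 > \tfrac{3}{4}\, \mu(R)^2 (R-\sqrt{2})^2 h^2.
\]

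Next I would substitute $\mu(R) = \frac{2\sqrt{3} R}{3(R-\sqrt{2})}$ and observe that
\[
\tfrac{3}{4}\, \mu(R)^2 (R-\sqrt{2})^2 \;=\; \tfrac{3}{4} \cdot \tfrac{4 R^2}{3 (R-\sqrt{2})^2} \cdot (R-\sqrt{2})^2 \;=\; R^2,
\]
so $|\vec{UW}|^2 > R^2 h^2$, i.e.\ $d(U,W) > hR$. Finally, Lemma~\ref{morethanHhop} yields that $U$ and $W$ are at least $(h+1)$ hops apart, which is strictly more than $h$ hops, as desired.

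There is no real obstacle here; the proof is essentially bookkeeping once one recognizes that $\mu(R)$ was designed precisely to make the Gauss-reduction lower bound $\tfrac{\sqrt{3}}{2}|\alpha|\,|u_1|$ meet the threshold $hR$ from Lemma~\ref{morethanHhop}, given the lower bound on $|u_1|$ coming from the $h$-hop validity of $u_1$. The only subtlety worth stating explicitly is that Lemma~\ref{fromGauss} is applied to each of $\alpha$ and $\beta$ separately, so that the condition ``$|\alpha| \ge \mu(R)$ or $|\beta| \ge \mu(R)$'' is enough to trigger the bound, without needing both.
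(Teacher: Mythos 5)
Your proof is correct and follows essentially the same route as the paper's: apply Lemma~\ref{fromGauss} to get the lower bound $\tfrac{3}{4}\max(\alpha^2,\beta^2)|u_1|^2$ on $|\vec{UW}|^2$, use the contrapositive of Lemma~\ref{hhopprop} to get $|u_1|>(R-\sqrt{2})h$, substitute $\mu(R)$ to obtain $d(U,W)>hR$, and conclude with Lemma~\ref{morethanHhop}. The only (cosmetic) difference is the order in which the two lower bounds are combined; your explicit remark that the ``or'' condition suffices because Lemma~\ref{fromGauss} bounds $\alpha$ and $\beta$ separately is a point the paper leaves implicit.
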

\proof
Since $u_1$ and $u_2$ are reduced, we can apply Lemma~\ref{fromGauss} and obtain:
\begin{eqnarray*}
|\vec{UW}|^2 & \ge & f(\alpha,\beta)|\myvec{1}|^2\\
				& \ge & \frac{3}{4}\alpha^2 |\myvec{1}|^2, \mathrm{~and~as~} \alpha \ge\mu(R) \\
				& \ge &\frac{3}{4} (\frac{2\sqrt{3}R}{3(R-\sqrt{2})})^2 |\myvec{1}|^2 \\
& \ge &(\frac{R}{R-\sqrt{2}})^2 |\myvec{1}|^2
\end{eqnarray*}  
Since the point $V_1$ is strictly more than h-hop away from $U$,
the lemma~\ref{hhopprop} implies by contradiction that 
$|u_1| = |\vec{UV_1}| > (R - \sqrt{2})h$. It follows that:
$$|\vec{UW}|^2 > R^2h^2$$

Applying Lemma~\ref{morethanHhop}, we obtain the result.
\endproof

\end{document}